\crefname{exmp}{Example}{Examples}
\newtheoremstyle{mytheoremstyle} 
    {5pt}                    
    {5pt}                    
    {\itshape}                   
    {\parindent}                           
    {\bf}                   
    {.}                          
    {.5em}                       
    {}  
\theoremstyle{mytheoremstyle}
\newtheorem{theorem}{Theorem}[section]
\newtheorem{conj}[theorem]{Conjecture}
\newtheorem{prop}[theorem]{Proposition}
\newtheoremstyle{mytdefintionstyle} 
    {5pt}                    
    {5pt}                    
    {\rm}                   
    {\parindent}                           
    {\bf}                   
    {.}                          
    {.5em}                       
    {}  
\theoremstyle{remark}
\newtheorem{rmrk}[theorem]{Remark}
\theoremstyle{mytdefintionstyle}
\newtheorem{defn}[theorem]{Definition}
\newtheorem{exmp}[theorem]{Example}
\newtheoremstyle{exmp_contd}
    {5pt}                    
    {5pt}                    
    {\rm}                   
    {\parindent}                           
    {\bf}                   
    {.}                          
    {.5em}                       
    {\thmname{#1}\ \thmnumber{ #2}\thmnote{#3}\ (continued)}  
\theoremstyle{exmp_contd}
\newcommand{\Q}{\mathbb{Q}}
\newcommand\N{\mathbb{N}}
\renewcommand\phi{\varphi}
\definecolor{darkgray}{rgb}{0.3,0.3,0.3}
\definecolor{LightGray}{gray}{0.9}
\definecolor{darkgreen}{rgb}{0.008,0.617,0.067}
\definecolor{brown}{rgb}{0.6,0.4,0.2}
\newif\ifjournalversion
\author[M.~Barakat]{Mohamed Barakat}
\address{Department of mathematics, Universit\"at Siegen, Walter-Flex-Str.~3, 57068 Siegen, Germany}
\email{\href{mailto:Mohamed Barakat <mohamed.barakat@uni-siegen.de>}{mohamed.barakat@uni-siegen.de}}
\author[R.~Br\"user]{Robin Br\"user}
\address{Theoretische Physik 1, Center for Particle Physics Siegen (CPPS), Universit\"at Siegen, Walter-Flex-Str.~3, 57068 Siegen, Germany
\footnote{Address after 30 September 2022: Albert-Ludwigs-Universit\"at Freiburg, Physikalisches Institut, Hermann-Herder-Str.~3, 79104 Freiburg, Germany}}
\email{\href{mailto:Robin Br\"user <robin.brueser@physik.uni-freiburg.de>}{robin.brueser@physik.uni-freiburg.de}}
\author[C.~Fieker]{Claus Fieker}
\address{TU Kaiserslautern, Gottlieb-Daimler-Straße
Geb\"aude 48, 67663 Kaiserslautern}
\email{\href{mailto:Claus Fieker <fieker@mathematik.uni-kl.de>}{fieker@mathematik.uni-kl.de}}
\author[T.~Huber]{Tobias Huber}
\address{Theoretische Physik 1, Center for Particle Physics Siegen (CPPS), Universit\"at Siegen, Walter-Flex-Str.~3, 57068 Siegen, Germany}
\email{\href{mailto:Tobias Huber <huber@physik.uni-siegen.de>}{huber@physik.uni-siegen.de}}
\author[J.~Piclum]{Jan Piclum}
\address{Theoretische Physik 1, Center for Particle Physics Siegen (CPPS), Universit\"at Siegen, Walter-Flex-Str.~3, 57068 Siegen, Germany}
\email{\href{mailto:Jan Piclum <piclum@physik.uni-siegen.de>}{piclum@physik.uni-siegen.de}}
\begin{document}

\vspace*{-13ex}
\hfill
\begin{minipage}{7em}
\noindent
{\small
\begin{tabular}{r}
SI-HEP-2022-30 \\
P3H-22-101
\end{tabular}}%
\end{minipage}
\vspace{5ex}

\title[Feynman integral reduction using Gr\"obner bases]{Feynman integral reduction using Gr\"obner bases}
\begin{abstract}
We investigate the reduction of Feynman integrals to master integrals using Gr\"obner bases in a rational double-shift algebra $Y$ in which the integration-by-parts (IBP) relations form a left ideal.
The problem of reducing a given family of integrals to master integrals can then be solved once and for all by computing the Gr\"obner basis of the left ideal formed by the IBP relations.
We demonstrate this explicitly for several examples.
We introduce so-called first-order normal-form IBP relations which we obtain by reducing the shift operators in $Y$ modulo the Gr\"obner basis of the left ideal of IBP relations.
For more complicated cases, where the Gr\"obner basis is computationally expensive, we develop an ansatz based on linear algebra over a function field to obtain the normal-form IBP relations.
\end{abstract}


\keywords{%
Loop integrals,
rational double shift algebra,
integration-by-parts reduction,
commutative and noncommutative Gr\"obner bases,
computer algebra%
}
\subjclass[2010]{%
13P10, 
16D25, 
16Z05, 
81Q30, 
81T18} 

\maketitle

\setcounter{tocdepth}{1}
\tableofcontents

\section{Introduction}

The LHC has been running for more than a decade now and has produced numerous interesting results, among them the discovery of the Higgs boson, precision measurements of Standard Model parameters like the top-quark mass, and searches for physics beyond the Standard Model.
On the theoretical side, all of these studies require precise evaluations of signal and background processes (see for example \cite{HHJP}).
In perturbative quantum field theory this entails the calculation of Feynman integrals with many loops and often with many kinematic invariants like masses and scalar products of external momenta.

A Feynman diagram corresponds in a well-defined way to a loop integral of the form
\begin{align}
  I(z_1, \ldots, z_n)
  =
  \int
  \operatorname{d}^d \ell_1 \cdots \operatorname{d}^d \ell_L
  \frac{1}{P_1^{z_1} \cdots P_n^{z_n}} \mbox{,}
\end{align}
where $d$  is the space-time dimension in dimensional regularization and $\ell_i$ with $i = 1, \ldots, L$ are the \textbf{loop momenta}.
The propagators (more precisely, propagator denominators) $P_i$ with $i = 1, \ldots, n$ are usually of the form $m_i^2-p_i^2$, where $m_i$ is a particle mass and $p_i$ is a linear combination of the $L$ loop momenta and $E$ \textbf{external momenta} $k_1, \ldots, k_E$.
They are deduced from the process under consideration and, without loss of generality, can be assumed to be linearly independent.
We call the set $\{ I(z_1, \ldots, z_n) \mid z_i\in\mathbb{Z} \}$ with given propagators $P_i$ a \emph{family of integrals}.
The integers $z_i$ are called the \textbf{indices} of the integral.
A given physical process is usually expressed in terms of integrals of several different families. 
Note that we have suppressed the integral's dependence on the kinematic invariants, since we are here mostly concerned with the dependence on the indices.
For a modern account on the calculation of loop integrals see for example \cite{Weinzierl:2022eaz}.

In a typical calculation, one has to evaluate thousands of loop integrals belonging to several integral families.
An indispensable tool in the calculation of multiloop integrals is therefore the integration-by-parts (IBP) method~\cite{Tkachov:1981wb,Chetyrkin:1981qh}, which provides relations between loop integrals with different propagator and numerator powers.
These recurrence relations are called IBP relations and can be used to express all integrals of a given family in terms of a small number of so-called \emph{master integrals}.
Nowadays this is usually applied in the form of Laporta's algorithm~\cite{Laporta:2000dsw}, which solves the system of linear equations generated by plugging in numerical values for the indices $z_1, \ldots, z_n$.

The performance of Laporta's algorithm can be greatly improved by using modular arithmetic~\cite{vonManteuffel:2014ixa,Peraro:2016wsq}.
This avoids huge intermediate expressions during the calculation and allows for more efficient parallelization.
There are many public and private codes to perform the integration by parts reduction, for instance ${\mathtt{AIR}}$~\cite{Anastasiou:2004vj}, ${\mathtt{FIRE}}$~\cite{Smirnov:2008iw,Smirnov:2014hma,Smirnov:2019qkx},
${\mathtt{Reduze}}$~\cite{Studerus:2009ye,vonManteuffel:2012np}, ${\mathtt{Kira}}$~\cite{Maierhofer:2017gsa,Klappert:2020nbg}, and \textsc{FiniteFlow} \cite{Peraro_2019}.

In order to generate a linear system of equations for Laporta's algorithm one has to specialize a set of IBP relations to a range of indices $z_1, \ldots, z_n$.
Typically this linear system contains a large number of integrals (as unknowns) that are oftentimes not directly needed but must be included to ensure the full reduction of the desired integrals.
This problem can be (partially) avoided by starting with a set of so-called unitarity-compatible IBP relations based on syzygies \cite{Bern:1993kr,Gluza:2010ws,Schabinger:2011dz,Lee:2014tja,Bohm:2017qme,Kosower:2018obg} (over a polynomial ring).
We refer to these IBP relations in \Cref{sec:special_ibps} as \emph{special} IBP relations.
They reduce the size of the linear system and improve the performance.
Additional new ideas towards a more direct reduction procedure have been developed:
They rely on algebraic geometry~\cite{Larsen:2015ped,Bohm:2018bdy,Bendle:2019csk} and intersection theory~\cite{Mastrolia:2018uzb,Frellesvig:2019uqt,Frellesvig:2019kgj,Abreu:2019wzk,Frellesvig:2020qot,Weinzierl:2020xyy,Caron-Huot:2021iev,Chen:2022lzr,Chestnov:2022alh}.

One limitation of Laporta's algorithm is that the reduction is only found for a given list of integrals.
Thus, when additional integrals are needed at a later point, the program has to be run again.
This can be overcome by deriving a full solution to the system of IBP recurrence relations.
Since a parametric solution by hand is clearly not feasible for multiscale problems, it would be desirable to have an algorithmic way of solving the IBP relations once and for all.
${\mathtt{LiteRed}}$~\cite{Lee:2012cn} is a publicly available program that performs this task using tailored heuristics that is able to reduce several complicated integral families.
In our work, we investigate the application of Gr\"obner bases to the solution of IBP recurrence relations.

Previous application of Gr\"obner bases in this context can be found in \cite{Tarasov:2004ks,Gerdt:2005qf,Smirnov:2005ky,Smirnov:2006tz,Smirnov:2006wh,Lee:2008tj}.
In \cite{Tarasov:2004ks}, the IBP relations are first transformed into a system of partial differential equations for which the Gr\"obner basis is then computed.
This method requires that all propagators have different, non-zero masses and that no external momentum squared is equal to one of the masses squared.
Thus, it is not always possible to apply the result to cases with zero or equal masses or on-shell momenta, since such limits can be singular.
Reference~\cite{Smirnov:2005ky} uses a modified version of Buchberger's algorithm to obtain so-called sector bases~\cite{Smirnov:2006tz}.
Finally we would like to emphasize that -- in contrast to previous noncommutative Gr\"obner basis approaches -- we work in the \emph{rational} double-shift algebra defined in \Cref{sec:Y}.

This paper is organized as follows.
In \Cref{sec:Y} we define the (rational) double-shift algebra in which the IBP relations form a left ideal.
In \Cref{sec:Groebner} we introduce the first-order normal-form IBP relations and highlight in which sense they differ from other well-known sets of IBP relations.
Furthermore we introduce the notion of a first-order family, i.e., an integral family for which the left ideal of IBP relations is generated by the first-order normal-form IBP relations.
In \Cref{sec:sectors} we define the notion of formally scaleless monomials and relate them to scaleless sectors.
In \Cref{sec:examples} we demonstrate on selected examples of first-order families the computation of Gr\"obner bases, normal-form IBP relations, and the detection of scaleless sectors using Gr\"obner basis reductions.
\Cref{sec:special_ibps} recalls the construction of special IBP relations using syzygies in a polynomial ring.
The special IBP relations turn out to provide a more efficient set of generators as a starting point for the Linear Algebra Ansatz in \Cref{sec:LA-Ansatz} to compute the normal-form IBP relations without precomputing a Gr\"obner basis.
Finally we conclude in \Cref{sec:conclusion}.

\section{The left ideal of IBP relations in the rational double-shift algebra} \label{sec:Y}

\subsection{Notation}

Denote by $p_i$ (for a symbol $p$ and a natural number $i$) the column vector of $d'$ indeterminates
\begin{align}
  p_i =
  \begin{pmatrix}
    p_i^0 \\
    p_i^1 \\
    \vdots \\
    p_i^{d'-1}
  \end{pmatrix}
\end{align}
and define the Lorentz invariant quadratic expression
\begin{align}
  p_i \cdot p_j = p_i^0 p_j^0 - \sum_{\mu=1}^{d'-1} p_i^\mu p_j^\mu \mbox{.}
\end{align}
For $p = \ell$ the vectors $\ell_1, \ldots, \ell_L$ will refer to the $L$ \textbf{loop momenta}.
For $p = k$ the vectors $k_1, \ldots, k_E$ will refer to the $E$ \textbf{external momenta}.

Consider the polynomial algebra $\Q[d, m_i^2]$ with coefficients in the field $\Q$ of rational numbers.
Its elements are polynomial expressions in the dimension symbol $d$ and the symbols of squared masses $m_i^2$.
Define the field of rational functions
\begin{align}
  \mathbb{F} \coloneqq \Q(d, m_i^2) \coloneqq \left\{ \frac{Z}{N} \,\Big|\, Z, N \in \Q[d, m_i^2], N \neq 0 \right\}
\end{align}
where the numerators $Z$ and nonzero denominators $N$ are polynomials in $\Q[d, m_i^2]$.

Consider the Lorentz invariant expressions that are polynomial expressions in the scalar products of the $L+E$ momenta $\ell_1, \ldots, \ell_L, k_1, \ldots, k_E$ with coefficients in $\mathbb{F}$.
Each such expression can be written as a polynomial in the $n = {\frac{L(L+1)}{2}+LE}$ \textbf{propagators} $
P_1, \ldots, P_n$ and so-called \textbf{extra Lorentz invariants} $S_1, \ldots, S_q$ with coefficients in $\mathbb{F}$.
The extra Lorentz invariants are constructed from the external momenta in such a way that the set $\{ P_1, \ldots, P_n, S_1, \ldots, S_q \}$ is algebraically independent over $\mathbb{F}$.
This means that $P_1, \ldots, P_n, S_1, \ldots, S_q$ generate a polynomial algebra over $\mathbb{F}$, which we denote by
\begin{align}
  T = \mathbb{F}[S_1, \ldots, S_q][P_1, \ldots, P_n] \mbox{.}
\end{align}
Since from some point on we do not need the special form of the $P_i$'s and $S_j$'s we replace them by symbols $D_1, \ldots, D_n$ and $s_1, \ldots, s_q$, respectively.
Likewise we replace the polynomial algebra $T$ by the isomorphic polynomial algebra
\begin{align}
  R = \mathbb{F}[s_1, \ldots, s_q][D_1, \ldots, D_n] \mbox{.}
\end{align}
For more mathematical details on the construction of the polynomial algebras $T$ and $R$ see \Cref{sec:App}.

The IBP relations are obtained from the fact that the operator $\frac{\partial}{\partial \ell_i^\mu} v_i^\mu$ turns the loop integrand into a divergence, i.e., annihilates the loop integral in dimensional regularization.
More precisely:
\begin{align}
  0
  =&
  \int
  \operatorname{d}^d \ell_1 \cdots \operatorname{d}^d \ell_L
  \; \frac{\partial}{\partial \ell_i^\mu} \left(v_i^\mu \;
  \frac{1}{P_1^{z_1} \cdots P_n^{z_n}} \right)
  \label{eq:div_integrand}
  \\
  =&
  \int
  \operatorname{d}^d \ell_1 \cdots \operatorname{d}^d \ell_L
  \left(\frac{\partial v_i^\mu}{\partial \ell_i^\mu} \right)
  \frac{1}{P_1^{z_1} \cdots P_n^{z_n}} + \label{eq:div}
  \\
  &
  \int
  \operatorname{d}^d \ell_1 \cdots \operatorname{d}^d \ell_L
  \left(v_i^\mu \frac{\partial}{\partial \ell_i^\mu} \right)
  \frac{1}{P_1^{z_1} \cdots P_n^{z_n}} \mbox{,} \label{eq:diff}
\end{align}
where
  \begin{align}
    v_i = C^j_i B_j
  \end{align}
  for $B_j \in \{\ell_1,\ldots,\ell_L, k_1, \ldots, k_E\}$ with coefficients (column) vector
\begin{align} \label{eq:C}
  C = (C^j_i)_{i=1,\ldots, L, j = 1, \ldots, L+E} \in T^{L(L+E) \times 1} \mbox{.}
\end{align}
The standard IBP relations are obtained by $C$ running through the standard basis of $T^{L(L+E) \times 1}$ (see \eqref{eq:standard_IBPs} below).

In the following we will rewrite the expression $\left(\frac{\partial v_i^\mu}{\partial \ell_i^\mu}\right)$ in \eqref{eq:div} and the differential operator $\left(v_i^\mu \frac{\partial}{\partial \ell_i^\mu}\right)$ in \eqref{eq:diff} in terms of the ring $R$.
To this end we define the \textbf{IBP-generating matrix}\footnote{The name is motivated by equation \eqref{eq:IBP}.} as the product matrix
\begin{align} \label{eq:Euler}
  \mathcal{E}
  &\phantom{:}=
  \left(\mathcal{E}_{j,c}^i\right) \\
  &\coloneqq
  J \cdot \underbrace{\left[I_L \otimes \begin{pmatrix} \ell_1 & \cdots & \ell_L & k_1 & \cdots & k_E \end{pmatrix}\right]}_{\in T^{Ld' \times L(L+E)}} = \left( \frac{\partial P_c}{\partial \ell_i^\mu} B^\mu_j \right) \in T^{n \times L(L+E)} \subset \widetilde{T}^{n \times L(L+E)} \mbox{,} \nonumber
\end{align}
where $J \coloneqq \begin{pmatrix} \frac{\partial P_c}{\partial \ell_i^\mu} \end{pmatrix} \in \widetilde{T}^{n \times Ld'}$ is the Jacobian matrix of the propagators, and where $\widetilde{T}$ is defined in \Cref{sec:App}.
Like the propagators, and unlike the Jacobian matrix, the entries of the IBP-generating matrix belong to the subring $T$ and can therefore be effectively rewritten as matrices over $R \cong T$ using the subalgebra membership algorithm.
The latter can be replaced by simple linear algebra due to the affine nature of $P_i$ as expressions in $p_i \cdot p_j$.
The dimensions of $\mathcal{E}$ are already independent of $d'$.
However, its entries as expressions in the generators of the subring $T$ formally still depend on $d'$.
But once $\mathcal{E}$ is rewritten as a matrix over $R$, the initial dependency of $\mathcal{E} \in R^{n \times L(L+E)}$ on the dimension $d'$ disappears\footnote{Physically, $d'$ should be thought of as the symbolic regularizing dimension $d$ rather than an integer.}.

For the coefficients vector $C \in R^{L(L+E) \times 1}$ consider the Jacobian
\begin{align}
  J_{C} \coloneqq \left( \frac{\partial C^j_i}{\partial D_c} \right) \in R^{L(L+E) \times n}
\end{align}
and the square matrix
\begin{align}
  \mathcal{E}_C \coloneqq \mathcal{E} J_C \in R^{n \times n} \mbox{.}
\end{align}
The divergence summand in \eqref{eq:div} becomes
\begin{align} \tag{\ref{eq:div}'}
  \frac{\partial v^\mu_i}{\partial \ell^\mu_i}= d \cdot C^i_i + \operatorname{tr} \mathcal{E}_C \coloneqq d \cdot \sum_{i=1}^L C^i_i + \operatorname{tr} \mathcal{E}_C \in R \mbox{.}
\end{align}
Furthermore, the second summand \eqref{eq:diff} becomes
\begin{align} \tag{\ref{eq:diff}'}
  v_i^\mu \frac{\partial}{\partial \ell_i^\mu}
  &=
  C^j_i \left( B_j^\mu \frac{\partial D_b}{\partial \ell_i^\mu} \right) \frac{\partial}{\partial D_b}
  =
  \mathcal{E}_{j,b}^i C^j_i \frac{\partial}{\partial D_b}
  \coloneqq
  \sum_{b=1}^n \left(\sum_{j=1}^{L+E} \sum_{i=1}^L \mathcal{E}_{j,b}^i C^j_i\right) \frac{\partial}{\partial D_b}\mbox{.}
\end{align}
To determine the action of the differential operation $v_i^\mu \frac{\partial}{\partial \ell_i^\mu}$ on $D_c^{-z_c}$ we use $\frac{\partial}{\partial D_b} D_c^{-z_c} = -z_c \delta_c^b D_c^{-(z_c+1)} = (-z_c D_c^{-1}) \delta_c^b D_c^{-z_c}$ resulting in
\begin{align} \label{eq:shift}
  \left(v_i^\mu \frac{\partial}{\partial \ell_i^\mu}\right) D_c^{-z_c}
  =
  \left(-z_c D_c^{-1} \mathcal{E}_{j,c}^i C^j_i \right) D_c^{-z_c}
  \coloneqq
  -z_c D_c^{-1} \left(\sum_{j=1}^{L+E} \sum_{i=1}^L \mathcal{E}_{j,c}^i C^j_i\right) D_c^{-z_c} \mbox{.}
\end{align}
The next section introduces the shift algebra which contains the IBP relations as shift operators.

\subsection{The (rational) double-shift algebra}

The IBP relations can be understood as shift operators acting on the polynomial algebra
\begin{align} \label{eq:A}
  A \coloneqq \mathbb{F}[s_1,\ldots,s_q][a_1,\ldots,a_n]
\end{align}
by shifts, and therefore as elements of the double-shift algebra
\begin{align}
  Y^\mathrm{pol} \coloneqq A\langle D_1, D_1^-, \ldots, D_n, D_n^- \rangle
\end{align}
with the relations (no summation over repeated indices)
\begin{equation}
\begin{aligned}
  [a_i, D_j] = \delta_{ij} D_i \, , \qquad [a_i, D_j^-] = -\delta_{ij}  D_i^-, \qquad D_i D_i^- = 1,
  \\
  [a_i, a_j] = [D_i, D_j] = [D_i^-, D_j^-] = [D_i, D_j^-] = 0,
\end{aligned}
\end{equation}
and partial right action
\begin{align}
  I(\ldots,z_i,\ldots) &\bullet D_i = I(\ldots,z_i-1,\ldots), \\
  \underbrace{I(\ldots,z_i,\ldots)}_{\text{not scaleless}} &\bullet D_i^- = I(\ldots,z_i+1,\ldots),
  \label{eq:pra1}
  \\
  I(\ldots,z_i,\ldots) &\bullet a_i = z_i I(\ldots,z_i,\ldots) \mbox{.}
  \label{eq:pra2}
\end{align}
The prefix ``double'' refers to the simultaneous occurrence of both the lowering operators $D_i$ and the raising operators $D_i^-$.

The action is \emph{partial} since $D_i^-$ cannot be applied to a scaleless integral.\footnote{Alternatively, one could rephrase such partial actions of algebras as actions of associated algebroids.}
Our choice of the \emph{right} action will be justified in \Cref{rmrk:right_action} and the definition of scaleless integrals is deferred to \Cref{sec:sectors}.

One can extend the action to the rational function field
\begin{align}
  K \coloneqq \operatorname{Frac} A = \mathbb{F}(s_1,\ldots,s_q)(a_1,\ldots,a_n) \coloneqq \left\{ \frac{Z}{N} \,\Big|\, Z, N \in A, N \neq 0 \right\} \mbox{,}
\end{align}
yielding the \textbf{rational} double-shift algebra
\begin{align} \label{eq:Y}
  Y \coloneqq K\langle D_1, D_1^-, \ldots, D_n, D_n^- \rangle \mbox{.}
\end{align}

The partial right action is extended via
\begin{align}
  I(z_1, \ldots, z_n) \bullet \frac{Z(a_1, \ldots, a_n)}{N(a_1,\ldots, a_n)} = \frac{Z(z_1, \ldots, z_n)}{N(z_1,\ldots, z_n)} I(z_1, \ldots, z_n) \mbox{,}
\end{align}
where $Z(a_1, \ldots, a_n)$ and $N(a_1,\ldots, a_n) \neq 0$ are polynomial expressions in the $a_i$'s with coefficients in rational expressions of the kinematic invariants $\mathbb{F}(s_1,\ldots,s_q)$ whenever $N(z_1,\ldots, z_n)$ is nonzero.

\subsection{Generating the left ideal of IBP relations}

For an arbitrary coefficients vector $C \in R^{L(L+E) \times 1}$ we get, using the above summation convention, the IBP (shift) operator
\begin{align} \label{eq:IBP}
  r(C) \coloneqq d \cdot C^i_i + \operatorname{tr} \mathcal{E}_C -a_c D_c^- \mathcal{E}_{j,c}^i C^j_i \in Y^\mathrm{pol} \subset Y \mbox{.}
\end{align}
Note that due to the Jacobian expression entering $\mathcal{E}_C$ the map
\begin{align} \label{eq:r}
  r: R^{L (L+E) \times 1} \to Y^\mathrm{pol}, C \mapsto r(C)
\end{align} is not $R$-linear, but merely linear over the subalgebra $\mathbb{F}[s_1, \ldots s_q] < R$.

The $L(L+E)$ \textbf{standard IBP relations} are obtained by $C$ running through the standard basis $\{e_1, \ldots, e_{L(L+E)}\}$ of $R^{L(L+E) \times 1}$, i.e.,
\begin{align} \label{eq:standard_IBPs}
  r_i \coloneqq r(e_i) \mbox{.}
\end{align}

\begin{defn} \label{defn:I_IBP}
Define the \textbf{left ideal of IBP relations} in $Y^\mathrm{pol}$ and $Y$ as the left ideal generated by IBP relations
\begin{align}
\begin{array}{rll}
  I_\mathrm{IBP}^\mathrm{pol}
  &\coloneqq \langle \operatorname{im}(r) \rangle_{Y^\mathrm{pol}} &= \left\langle r(C) \mid C \in R^{L(L+E) \times 1} \right\rangle_{Y^\mathrm{pol}} \lhd Y^\mathrm{pol},
  \\[0.3em]
  I_\mathrm{IBP}
  &\coloneqq \langle \operatorname{im}(r) \rangle_Y &= \left\langle r(C) \mid C \in R^{L(L+E) \times 1} \right\rangle_Y \quad \lhd Y \mbox{.}
\end{array}
\end{align}
The left ideal of IBP relations annihilates all loop integrals of the given family, i.e.,
  \begin{align}
    I(z_1, \ldots, z_n) \bullet f = 0
  \end{align}
  for all $f \in I_\mathrm{IBP}^{(\mathrm{pol})}$ whenever the partial action is defined.
\end{defn}

  Since the map $r$ in \eqref{eq:r} is not $R$-linear one needs to formally prove that the left ideals $I^\mathrm{pol}_\mathrm{IBP}$ and $I_\mathrm{IBP}$ are \emph{finitely} generated, more precisely:
\begin{prop} \label{prop:finite_generation}
 The left ideals $I^\mathrm{pol}_\mathrm{IBP}$ and $I_\mathrm{IBP}$ are generated by the standard IBP relations:
  \begin{equation}
  \begin{aligned}
    I_\mathrm{IBP}^\mathrm{pol}
    &= \langle r_i \mid i = 1, \ldots, L(L+E) \rangle_{Y^\mathrm{pol}} \lhd Y^\mathrm{pol},
    \\
    I_\mathrm{IBP}
    &= \langle r_i \mid i = 1, \ldots, L(L+E) \rangle_Y \quad \lhd Y \mbox{.}
  \end{aligned}
  \end{equation}
\end{prop}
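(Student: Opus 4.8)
The plan is to cut the \emph{a priori} infinite generating family $\operatorname{im}(r)$ down to the finite one $\{r_1,\dots,r_{L(L+E)}\}$ by exploiting that $r$, while not manifestly $R$-linear because of the Jacobian $J_C$ sitting inside $\mathcal{E}_C$, is at least $\mathbb{F}[s_1,\dots,s_q]$-linear (as recorded after \eqref{eq:r}). Since $R=\mathbb{F}[s_1,\dots,s_q][D_1,\dots,D_n]$ is free as a module over $\mathbb{F}[s_1,\dots,s_q]$ on the monomials $D^\alpha:=D_1^{\alpha_1}\cdots D_n^{\alpha_n}$, $\alpha\in\mathbb{N}^n$, every $C\in R^{L(L+E)\times 1}$ is an $\mathbb{F}[s_1,\dots,s_q]$-linear combination of the vectors $D^\alpha e_i$; hence, by $\mathbb{F}[s_1,\dots,s_q]$-linearity of $r$ and because $\mathbb{F}[s_1,\dots,s_q]\subseteq Y^{\mathrm{pol}}$, the left ideal $\langle\operatorname{im}(r)\rangle_{Y^{\mathrm{pol}}}$ is already generated by $\{\,r(D^\alpha e_i)\mid\alpha\in\mathbb{N}^n,\ i=1,\dots,L(L+E)\,\}$. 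So it is enough to prove $r(D^\alpha e_i)\in\langle r_1,\dots,r_{L(L+E)}\rangle_{Y^{\mathrm{pol}}}$ for all $\alpha$ and $i$; the statement over $Y$ will then follow from the one over $Y^{\mathrm{pol}}$ by enlarging the coefficient algebra, since $\operatorname{im}(r)\subseteq Y^{\mathrm{pol}}\subseteq Y$.

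The technical core is the identity
\[
  r(D_k C)\;=\;D_k\,r(C)\qquad\text{for all }C\in R^{L(L+E)\times 1},\ k=1,\dots,n,
\]
which I would check summand by summand on \eqref{eq:IBP}. The summand $d\cdot C^i_i$ obviously turns into $d\cdot D_k C^i_i=D_k\,(d\cdot C^i_i)$. For the trace summand the Leibniz rule $\partial(D_kC^j_i)/\partial D_c=\delta_{kc}C^j_i+D_k\,\partial C^j_i/\partial D_c$ gives $\operatorname{tr}\mathcal{E}_{D_kC}=D_k\operatorname{tr}\mathcal{E}_C+\sum_{i,j}\mathcal{E}^i_{j,k}C^j_i$, producing exactly one extra term, namely the $k$-th entry of $\mathcal{E}C$. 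For the shift summand $-a_c D_c^-\,\mathcal{E}^i_{j,c}C^j_i$ one first rewrites $\mathcal{E}^i_{j,c}(D_kC^j_i)=D_k\,\mathcal{E}^i_{j,c}C^j_i$ (legitimate since $\mathcal{E}^i_{j,c},D_k,C^j_i$ all lie in the commutative subring $R\subseteq Y^{\mathrm{pol}}$), and then commutes $D_k$ leftward past the factor $a_c D_c^-$ using $[D_c^-,D_k]=0$ and $[a_c,D_k]=\delta_{ck}D_k$; the resulting ``diagonal'' contribution $c=k$ is $D_kD_k^-\sum_{i,j}\mathcal{E}^i_{j,k}C^j_i$, which equals $\sum_{i,j}\mathcal{E}^i_{j,k}C^j_i$ because $D_kD_k^-=1$, and it carries the opposite sign, so it cancels the extra term coming from the trace summand. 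What survives is exactly $D_k\,r(C)$.

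Granting the identity, iterating it over the factors of $D^\alpha$ yields $r(D^\alpha e_i)=D^\alpha\,r(e_i)=D^\alpha r_i$, which lies in $\langle r_1,\dots,r_{L(L+E)}\rangle_{Y^{\mathrm{pol}}}$ since $D^\alpha\in R\subseteq Y^{\mathrm{pol}}$. Therefore $\operatorname{im}(r)\subseteq\langle r_1,\dots,r_{L(L+E)}\rangle_{Y^{\mathrm{pol}}}$, while the reverse inclusion is immediate from $r_i=r(e_i)\in\operatorname{im}(r)$; hence $I_{\mathrm{IBP}}^{\mathrm{pol}}=\langle r_1,\dots,r_{L(L+E)}\rangle_{Y^{\mathrm{pol}}}$, and multiplying by $Y$ on the left gives $I_{\mathrm{IBP}}=\langle r_1,\dots,r_{L(L+E)}\rangle_{Y}$ as well.

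I expect the only genuinely delicate step to be the displayed identity: one must keep the pair index $(i,j)$ and the propagator index $c$ apart throughout the product $\mathcal{E}_C=\mathcal{E}J_C$, and then confirm that the unique extra term created by differentiating the Jacobian entries is matched, with opposite sign, by the unique extra term extracted from $a_cD_c^-$ via $[a_c,D_k]=\delta_{ck}D_k$ and $D_kD_k^-=1$. Everything else --- the reduction to monomials and the final comparison of left ideals in $Y^{\mathrm{pol}}$ and $Y$ --- is purely formal. (The identity in fact shows that $r$ is left $R$-linear, so no induction on $|\alpha|$ is actually required; if one wished to avoid establishing it in closed form, the same computation carried out modulo terms of strictly smaller $|\alpha|$ would instead feed an induction on $|\alpha|$.)
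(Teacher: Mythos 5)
Your argument is correct, and it follows the same overall strategy as the paper's proof: decompose an arbitrary $C$ into an $\mathbb{F}[s_1,\ldots,s_q]$-combination of monomial vectors $D^\alpha e_j$ and then show that $r(D^\alpha e_j)=D^\alpha r_j$, so that every $r(C)$ lies in the left ideal generated by the $r_j$. Where you differ is in how this key step is established. The paper argues at the level of the integrand: multiplying $v$ by $\prod_c P_c^{z_c}$ amounts, via \eqref{eq:trick}, to applying the standard IBP vector $B_j$ to an integral with shifted indices, which is then read off as left multiplication of $r_j$ by $\prod_c D_c^{z_c}$. You instead verify the operator identity $r(D_kC)=D_kr(C)$ directly in $Y^{\mathrm{pol}}$ from \eqref{eq:IBP}, checking that the extra term $(\mathcal{E}C)_k$ produced by the Leibniz rule in $\operatorname{tr}\mathcal{E}_{D_kC}$ is cancelled by the term extracted from $a_cD_c^-$ via $[a_c,D_k]=\delta_{ck}D_k$ and $D_kD_k^-=1$; this computation is correct (the index bookkeeping with $J_{D_kC}=Ce_k^{T}+D_kJ_C$ works out exactly as you say). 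Your route is more self-contained algebraically, since it never appeals to the (partial) action on integrals, and it proves the stronger statement that $r$ is equivariant for \emph{left} multiplication by $R\subset Y^{\mathrm{pol}}$, which in particular removes the need for any induction on $|\alpha|$. Note that this sharpens, rather than contradicts, the remark after \eqref{eq:r}: the individual summands of $r(C)$ (notably $\operatorname{tr}\mathcal{E}_C$, because of the Jacobian) are not $R$-linear, and $r(C\,f)\neq r(C)\,f$ for right multiplication, but your identity shows that the two defects cancel on the left, which is precisely what the finite generation of $I_{\mathrm{IBP}}^{\mathrm{pol}}$ and $I_{\mathrm{IBP}}$ requires.
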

\begin{proof}
  A vector $C \in T^{L(L+E) \times 1} \equiv R^{L(L+E) \times 1}$ is the common coefficients vector of $v_i = C^j_i B_j$ for $i=1,\ldots,L$ in \eqref{eq:div_integrand}.
  Each coefficient $C_i^j \in T \equiv R$ is an $\mathbb{F}[S_1, \ldots, S_q]$-linear combination of monomials of the form $\prod_{c=1}^n P_c^{z_{c,i,j}}$.
  It follows that
  \begin{align} \label{eq:trick}
    \left( \prod_{c=1}^n P_c^{z_{c,i,j}} \right) B_j \frac{1}{P_1^{z_1} \cdots P_n^{z_n}} = B_j \frac{1}{P_1^{z_1-z_{1,i,j}} \cdots P_n^{z_n-z_{n,i,j}}} \mbox{.}
  \end{align}
  The right hand side of \eqref{eq:trick} results in the IBP operator
  \begin{equation}
    \left( \prod_{c=1}^n D_c^{z_{c,i,j}} \right) r_j \in Y^\mathrm{pol} \mbox{.} \qedhere
  \end{equation}
\end{proof}

Dictated by the loop diagram, some of the propagators play the role of numerators, i.e., only their nonpositive exponents $z_i$ are considered.
These $u$ many propagators are called \textbf{irreducible numerators} and are conventionally grouped at the end:
\begin{align}
  \underbrace{P_1, \ldots, P_{n-u}}_{n-u}, \underbrace{P_{n-u+1}, \ldots, P_n}_{u} \mbox{.}
\end{align}

\begin{rmrk} \label{rmrk:right_action}
  The annihilator of \emph{all} elements of a right or left action is a two-sided ideal.
  However, annihilators of \emph{partial} right (left) actions are merely left (right) ideals.
  But since the software we use for computing noncommutative Gr\"obner bases only supports \emph{left} ideals, we had to opt for partial \emph{right} actions.
\end{rmrk}

\section{Gr\"obner bases in the noncommutative double-shift algebras} \label{sec:Groebner}

\subsection{Gr\"obner bases, standard monomials, and master integrals}

Below we need the notion of a Gr\"obner basis of the left ideals $I_\mathrm{IBP}^\mathrm{pol}$ and $I_\mathrm{IBP}$ in the respective noncommutative algebras $Y^\mathrm{pol} = A\langle D_1, D_1^-, \ldots, D_n, D_n^- \rangle$ and $Y = K\langle D_1, D_1^-, \ldots, D_n, D_n^- \rangle$.
In both cases we use generalizations of Buchberger's algorithm \cite{Buch} to the context of GR-algebras and Ore algebras, respectively.

Replacing a set of generators of a left ideal by a Gr\"obner basis (with respect to a monomial order) might introduce redundant generators.
However, these are necessary for the reduction procedure to produce unique remainders, independent of possible choices of the reduction steps.

Let $G^\mathrm{pol}$ and $G$ denote the Gr\"obner bases of the left ideals $I_\mathrm{IBP}^\mathrm{pol} \unlhd Y^\mathrm{pol}$ and $I_\mathrm{IBP} \unlhd Y$, respectively.
As customary we denote by $\operatorname{NF}_{G^\mathrm{pol}}(f) \in Y^\mathrm{pol}$ and $\operatorname{NF}_G(g) \in Y$ the normal forms of $f \in Y^\mathrm{pol}$ and $g \in Y$ with respect to the Gr\"obner bases $G^\mathrm{pol}$ and $G$, respectively.

One would generally expect the number of elements in $G$ to be smaller or equal to that of $G^\mathrm{pol}$.
This might fail in trivial cases as in \Cref{ex:one_loop_tadpole}.
More involved cases like \Cref{exmp:one-loopbox} show that the difference of cardinalities can be significant.

\begin{defn}
A \textbf{standard monomial} with respect to the Gr\"obner basis $G$ of $I_\mathrm{IBP} \unlhd Y$ is a monomial $f$ in the indeterminates $D_i, D_j^-$ such that $\operatorname{NF}_G(f) = f$.
\end{defn}

\begin{rmrk} \label{rmrk:standard_monomials}
The set of standard monomials is a basis for the finite dimensional $K$-vector space $Y / I_\mathrm{IBP}$.
The set of standard monomials corresponds to a set of master integrals with respect to some fixed initial integral, usually $I(\underbrace{1, \ldots, 1}_{n-u}, \underbrace{0, \ldots, 0}_u)$.
Note that due to possible symmetries of the problem there might exist $\mathbb{F}(s_1,\ldots,s_q)$-linear relations among these master integrals (cf.~\Cref{exmp:2LoopTadpole}).
\end{rmrk}

For Gr\"obner basis and normal form computations in the polynomial double-shift algebra $Y^\mathrm{pol}$ we use \textsc{Singular}'s subsystem \textsc{Plural} \cite{plural} and for the rational double-shift algebra $Y$ we use Chyzak's \textsf{Maple} package $\mathtt{Ore\_algebra}$ \cite{Chyzak-1998-GBS}.
For the technical implementation we developed the package $\mathtt{LoopIntegrals}$ \cite{LoopIntegrals}.
$\mathtt{LoopIntegrals}$ is currently written in {\textsf{GAP}}~\cite{GAP4111} and relies on the $\mathtt{homalg}$-project packages~\cite{homalg-project} which offer a unified interface to \textsc{Singular} \cite{singular431} and \textsf{Maple}.
A \textsc{Mathematica} package that can perform the required Gröbner basis computations over the rational double-shift algebra is $\mathtt{HolonomicFunctions}$ \cite{Koutschan09,Koutschan10b}.
The $\mathtt{homalg}$-project does not offer an interface to \textsc{Mathematica} yet.

\begin{exmp}[One-loop tadpole] \label{ex:one_loop_tadpole}
  The one-loop tadpole is defined by the loop momentum $\ell_1$ and no external momentum (in particular, $L=1, E=0$).
  The single internal line is massive with mass $m$.
  The $n=1$ propagator is
  \begin{align}
    P_1 = -\left(\ell_1^2 - m^2\right) \mbox{.}
  \end{align}
  The $L(L+E) = 1$ standard IBP relation is
  \begin{align}
    r_1 = 2 m^2 a_1 D_1^-+ (d - 2 a_1) \mbox{,}
  \end{align}
  expressed as element of the polynomial double-shift algebra
  \begin{align}
    Y^\mathrm{pol} \coloneqq \Q(d,m^2)[a_1] \langle D_1, D_1^- \rangle \mbox{.}
  \end{align}
  
  One can verify that the cyclic generator $r_1$ is already the reduced Gr\"obner basis $G^\mathrm{pol} = \{ r_1 \}$ of the left ideal $I_\mathrm{IBP}^\mathrm{pol} \unlhd Y^\mathrm{pol}$.
  Switching to the \emph{rational} double-shift algebra $Y$ and computing the reduced Gr\"obner basis $G$ of $I_\mathrm{IBP} \unlhd Y$ we get the two generators
  \begin{align} 
    G = \{ G_1, G_2 \} = 
    \{ 2 m^2 a_1 D_1^- + (d - 2 a_1),
       2 m^2 (a_1 - 1) + (d - 2 a_1 + 2) D_1 \} \mbox{,}
  \end{align}
  where $G_2 = D_1 G_1$.
  A simple computation reveals that the Gr\"obner basis reductions modulo $G^\mathrm{pol} \subset Y^\mathrm{pol}$ and $G \subset Y$ yield
  \begin{align}
    \operatorname{NF}_G(a_1 D_1^-) = \operatorname{NF}_{G^\mathrm{pol}}(a_1 D_1^-) = -\frac{d - 2 a_1}{2m^2} \mbox{.}
  \end{align}
  In particular, the normalized IBP relation $\frac{r_1}{2m^2}$ takes the special form:
  \begin{align}
    \frac{r_1}{2m^2} = a_1 D_1^- + \frac{d - 2 a_1}{2m^2} = a_1 D_1^- - \operatorname{NF}_G(a_1 D_1^-) \mbox{.}
  \end{align}
  
  This special form of the IBP relation also reflects the functional dependence of the integral's closed-form result:
  Using
  \begin{align}
    I(z_1) = \int \operatorname{d}^d \ell_1 \frac{1}{\left(-\ell_1^2 + m^2 \right)^{z_1}}
    =
    i\pi^{d/2}\,
    \frac{
    \Gamma\left(z_1 - \frac{d}{2}\right)}{
    \Gamma(z_1)\,
    (m^2)^{z_1 - d/2}}
  \end{align}
  and $\Gamma(z+1)=z\,\Gamma(z)$ we obtain
  \begin{align} \label{eq:contiguous_one-loop-tadpole}
    z_1\, I(z_1 + 1) =-\frac{d - 2 z_1}{2 m^2}\, I(z_1) \mbox{.}
  \end{align}
  It is obvious from \eqref{eq:contiguous_one-loop-tadpole} that the special form of the IBP relation ---regardless of its characterization using Gr\"obner bases--- is ideally suited for performing IBP reductions.
  
  Finally,
  \begin{equation}
  \begin{aligned}
    \operatorname{NF}_G(1) =\,
    &
    1
    &&\text{$\leadsto 1$ is (trivially) a standard monomial},
    \\
    \operatorname{NF}_G(D_1) =\,
    &
    -\frac{2 m^2 (a_1 - 1)}{d - 2 a_1 + 2}
    &&\text{$\leadsto D_1$ is a nonstandard monomial.}
  \end{aligned}
  \end{equation}
  Hence, the set of standard monomials with respect to $G$ is merely
  \begin{align}
    \{ 1 \} \mbox{,}
  \end{align}
  which by \Cref{rmrk:standard_monomials} corresponds to the single master integral
  \begin{align}
    \{ I(1) \} \mbox{.}
  \end{align}
  
\end{exmp}

\subsection{First-order normal-form IBP relations}

Motivated by the previous simple example we are particularly interested in IBP relations of the following special form.
\begin{defn} \label{defn:1st-order_normal-form}
We call for $i = 1, \ldots, n = {\frac{L(L+1)}{2}+LE} \leq L(L+E) = n + \binom{L}{2}$ the IBP relations of the form
\begin{align} \label{eq:Ri}
  R_i \coloneqq
  \begin{cases}
    a_i D_i^- - \operatorname{NF}_G(a_i D_i^-) & \mbox{ for } i \leq n-u,
    \\[0.2em]
    a_i D_i - \operatorname{NF}_G(a_i D_i) & \mbox{ for } i > n-u \mbox{,}
  \end{cases}
\end{align}
in $I_\mathrm{IBP} \unlhd Y = K\langle D_1, D_1^-, \ldots, D_n, D_n^- \rangle$ where $\operatorname{NF}_G(a_i D_i^{\text{(}-\text{)}})$ is a $K$-linear combination of the standard monomials, the \textbf{first-order normal-form IBPs}.
\end{defn}
The adjective ``first-order'' refers to the linear occurrences of $D_i^-$ and $D_i$ in the monomials in \eqref{eq:Ri} for which the normal forms are to be computed.

Examples (with $u=0$) show that $\operatorname{NF}_{G^\mathrm{pol}}(a_i D_i^-)$ with respect to the \emph{polynomial} Gr\"obner basis $G^\mathrm{pol}$ includes expressions in the $D_j^-$'s.
In contrast, in the examples treated in \Cref{sec:examples}, the normal forms $\operatorname{NF}_G(a_i D_i^-)$ with respect to the \emph{rational} Gr\"obner basis $G$ do not involve any $D_j^-$.
However, the $K$-coefficients of $\operatorname{NF}_G(a_i D_i^-)$ will often be true fractions in $K \setminus A = (\operatorname{Frac} A) \setminus A$.

\begin{rmrk} \label{rmrk:aposteriori}
  We expect the numerators $\operatorname{num}(R_i) \in Y^\mathrm{pol}$ to lie in $I_\mathrm{IBP}^\mathrm{pol}$.
  This can be verified for each specific example by checking that all $\operatorname{num}(R_i)$ reduce to zero modulo the polynomial Gr\"obner basis $G^\mathrm{pol} \subset Y^\mathrm{pol}$.
  This proves a posteriori that the reduction modulo $G$ used to compute the $R_i$'s can be obtained without dividing by polynomials in $\Q[a_1,\ldots,a_n]$.
  In particular, the normal-form IBP relations are valid for all indices $(z_1, \ldots, z_n) \in \mathbb{Z}^n$.
\end{rmrk}

\begin{defn} \label{defn:polynomiality}
  The reductions of $\operatorname{num}(R_i)$ modulo $G^\mathrm{pol}$ to zero for $i=1,\ldots, n$ yield a matrix $\tau \in (Y^\mathrm{pol})^{n \times L(L+E)}$ such that
  \begin{align} \label{eq:tau}
    \begin{pmatrix}
      \operatorname{num}(R_1) \\
      \vdots \\
      \operatorname{num}(R_n)
    \end{pmatrix}
    =
    \tau
    \begin{pmatrix}
      r_1 \\
      \vdots \\
      \vdots \\
      r_{L(L+E)}
    \end{pmatrix} \mbox{.}
  \end{align}
  We call the matrix $\tau$ \textbf{a certificate of polynomiality}.
\end{defn}

\begin{rmrk} \label{rmrk:tau_not_unique}
  Since $\{r_1, \ldots, r_{L(L+E)}\}$ is generally not a \emph{free} subset\footnote{unless when $L(L+E) = 1$} of $Y^\mathrm{pol}$ the equation \eqref{eq:tau} does not determine $\tau$ uniquely.
  The non-freeness of $\{r_1, \ldots, r_{L(L+E)}\}$ (and hence the non-uniqueness of $\tau$) can be verified by computing any nontrivial syzygy of $\left(\begin{smallmatrix} r_1 \\ \vdots \\ r_{L(L+E)} \end{smallmatrix}\right)$ over $Y^\mathrm{pol}$.
\end{rmrk}

\subsection{First-order family of loop integrals}

Examples in \Cref{sec:examples} show that the left ideal $\langle \operatorname{num}(R_i) \mid i = 1, \ldots, n \rangle 
\unlhd Y^\mathrm{pol}$ generated by the numerators $\operatorname{num}(R_i)$ is generally \emph{strictly} contained in $I_\mathrm{IBP}^\mathrm{pol}$.
However, there is a class of loop integrals, for which equality holds over the rational double-shift algebra $Y$:

\begin{defn} \label{defn:1storder}
  We call an integral family \textbf{first-order} if the first-order normal-form IBPs form a generating set of the left ideal $I_\mathrm{IBP} \unlhd Y$ of IBP relations:
  \begin{align}
    \langle R_i \mid i = 1, \ldots, n \rangle_Y = I_\mathrm{IBP} \coloneqq \langle r_i \mid i = 1, \ldots, L(L+E) \rangle_Y \mbox{.}
  \end{align}
\end{defn}

To verify this it suffices to compute the (normalized) reduced minimal Gr\"obner basis $G'$ of $\left(\begin{smallmatrix}
      \operatorname{num}(R_1) \\ \vdots \\ \operatorname{num}(R_n) \end{smallmatrix}\right)$ over $Y$ and check that $G' = G$.
  Alternatively one could verify that $\left(\begin{smallmatrix} r_1 \\ \vdots \\ r_{L(L+E)} \end{smallmatrix}\right)$ reduces to zero modulo $G'$.
  
\begin{defn} \label{defn:eta}
  A bookkeeping of this reduction yields a matrix $\eta \in Y^{{L(L+E) \times n}}$ such that
  \begin{align}
    \eta \begin{pmatrix}
      \operatorname{num}(R_1) \\
      \vdots \\
      \operatorname{num}(R_n)
    \end{pmatrix}
    =
    \begin{pmatrix}
      r_1 \\
      \vdots \\
      \vdots \\
      r_{L(L+E)}
    \end{pmatrix} \mbox{.}
  \end{align}
  We call the matrix $\eta$ \textbf{a certificate of first-order generation}.
\end{defn}
Due to \Cref{rmrk:tau_not_unique} one cannot expect that $\tau$ and $\eta$ are mutually inverse over $Y$, even when $L=1$ and $n = L(L+E)$.

\begin{rmrk} \label{rmrk:genuine}
  Different from the standard IBP relations $r_i$ and the special IBP relations in the sense of \Cref{sec:special_ibps} we will see that the numerators $\operatorname{num}(R_i) \in I^\mathrm{pol}_\mathrm{IBP} \unlhd Y^\mathrm{pol}$ of the normal-form IBP relations in the examples in \Cref{sec:examples} cannot lie in the image of the map $r: R^{L (L+E) \times 1} \to Y^\mathrm{pol}$ \eqref{eq:r}.
  The normal-form IBP relations are in this sense ``genuine''.
\end{rmrk}

\section{Sectors and symmetries of loop integrals} \label{sec:sectors}

\subsection{Sectors and scaleless integrals}

A \textbf{sector} $V$ is the set of integrals $I(z_1, \ldots, z_n)$ which have the same set of positive indices $U = \operatorname{pos}(V) = \{i | z_i > 0 \} \subseteq \{1, \ldots, n-u\}$.
This means that all integrals of a sector have the same factors $P_i$ in the denominator, possibly raised to different positive powers $z_i$.
In a given sector with subset of positive indices $U \subseteq \{1, \ldots, n-u\}$ the integral with $z_i = \chi_U(i)$ is called the \textbf{corner integral} of the sector, where $\chi_U$ is the characteristic function of $U$ as a subset of $\{1, \ldots, n\}$.
This means, the corner integral of a sector has no numerator and all factors in the denominator have power one.
A sector $V_1$ is called a \textbf{subsector} of $V_2$ if $\operatorname{pos}(V_1) \subseteq \operatorname{pos}(V_2)$.


Let $I(z_1,\ldots,z_n)$ (with nonnegative $z_i$'s) be an $L$-loop integral which does not factorize into a product of lower-loop integrals.
Setting the speed of light to unity, this integral has the physical dimension of the mass to the power $L d-2\sum z_i$.
After the integration, only masses and kinematical invariants can carry this mass dimension.
If no such quantities are available after integration, either because they are set to zero by the external kinematics or because the (evaluated) integral does not depend on them, the integral is considered to be \textbf{scaleless}.
Scaleless integrals are set to zero in dimensional regularization.
If the $L$-loop integral $I(z_1, \ldots, z_n)$ factorizes in a product of lower-loop integrals, then $I(z_1, \ldots, z_n)$ is considered scaleless if any of its factors is scaleless.

In an integral family scaleless integrals typically appear in sectors with few propagators (note that terms in the numerator cannot provide a mass scale).
It is well-known that if the corner integral of the sector is scaleless, then all integrals of this sector are scaleless as well.
In such a case we call the sector a \textbf{scaleless sector}.
Note that all subsectors of a scaleless sector are also scaleless.

For an algorithmic identification of scaleless sectors by means of linear algebra the Symanzik polynomials of the loop diagram (often denoted as $\mathcal{U}$ and $\mathcal{F}$) can be used.
Our package $\mathtt{LoopIntegrals}$ \cite{LoopIntegrals} uses ideas of \cite{Hoff2015} based on \cite{Pak_2011}, see also \cite{Lee:2013mka} for a similar algorithm.

Motivated by computations in the noncommutative rational double-shift algebra we suggest the following definition and the conjecture below:

\begin{defn}[Formally scaleless monomial] \label{defn:scaleless}
  We call a monomial of the form $D_1^{i_1} \cdots D_{n-u}^{i_{n-u}}$ ($i_j \in \N$) a \textbf{formally scaleless monomial} (with respect to $I(\underbrace{1, \ldots, 1}_{n-u}, \underbrace{0, \ldots, 0}_u)$) if
  \begin{align}
    0 = \operatorname{NF}_G(D_1^{i_1} \cdots D_n^{i_n})_{|z_1 = \ldots = z_{n-u} = 1,\, z_{n-u+1} = \ldots = z_n = 0} \mbox{.}
  \end{align}
\end{defn}


\begin{conj} \label{conj:scaleless}
  A monomial $D_1^{i_1} \cdots D_n^{i_n}$ is formally scaleless iff
  \begin{align}
    I(1-i_1, \ldots, 1-i_{n-u}, -i_{n-u+1}, \ldots, -i_n)
  \end{align}
  is scaleless.
\end{conj}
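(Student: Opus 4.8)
The plan is to rewrite \Cref{conj:scaleless} as a statement purely about the quotient $Y/I_\mathrm{IBP}$ and the combinatorics of sectors, and then to prove the two implications by quite different means. Write $\mathbf{1}\coloneqq(1,\dots,1,0,\dots,0)$ (with $n-u$ ones and $u$ zeros) for the reference index vector and, for $\mathbf{i}=(i_1,\dots,i_n)\in\N^n$, set $\mathbf{w}(\mathbf{i})\coloneqq\mathbf{1}-\mathbf{i}=(1-i_1,\dots,1-i_{n-u},-i_{n-u+1},\dots,-i_n)$. Iterating the lowering action \eqref{eq:pra1} (which, unlike $D_i^-$, is unconditionally defined) gives $I(\mathbf{1})\bullet(D_1^{i_1}\cdots D_n^{i_n})=I(\mathbf{w}(\mathbf{i}))$; since $D_1^{i_1}\cdots D_n^{i_n}-\operatorname{NF}_G(D_1^{i_1}\cdots D_n^{i_n})\in I_\mathrm{IBP}$ annihilates the family, $I(\mathbf{w}(\mathbf{i}))$ equals $I(\mathbf{1})\bullet\operatorname{NF}_G(D_1^{i_1}\cdots D_n^{i_n})$, which is the decomposition of $I(\mathbf{w}(\mathbf{i}))$ into the master integrals of \Cref{rmrk:standard_monomials} with coefficients obtained from the $K$-coefficients of $\operatorname{NF}_G$ by the substitution of \Cref{defn:scaleless} (each coefficient evaluated at $a_j=1$ for $j\le n-u$ and $a_j=0$ otherwise; by \Cref{rmrk:aposteriori} none of these evaluations has a pole, so this is legitimate). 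Hence $D_1^{i_1}\cdots D_n^{i_n}$ is formally scaleless exactly when every one of these evaluated coefficients vanishes. Observe moreover that $\operatorname{pos}(\mathbf{w}(\mathbf{i}))=\{\,j\le n-u\mid i_j=0\,\}$, so whether $I(\mathbf{w}(\mathbf{i}))$ is scaleless depends only on which of $i_1,\dots,i_{n-u}$ vanish.

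For the implication ``scaleless $\Rightarrow$ formally scaleless'' the point to pin down is that $I_\mathrm{IBP}$ already ``knows'' that scaleless sectors vanish, even though $G$ is computed from the loop-momentum IBP relations alone. I would prove the following lemma: if a sector $U\subseteq\{1,\dots,n-u\}$ is scaleless, then there is an element of $I_\mathrm{IBP}$ — coming from the dilatation/scaling identity, i.e.\ from an IBP operator $r(C)$ (with $C$ the coefficient vector realizing $v_i=\ell_i$), hence lying in $\langle r_1,\dots,r_{L(L+E)}\rangle$ by \Cref{prop:finite_generation} — whose specialization at the corner index $\chi_U$ equals the corner monomial of $U$ up to a factor that is a nonzero element of $K$, with all mass/invariant corrections absent precisely because $U$ is scaleless (this is the statement that the Symanzik polynomial $\mathcal{F}$, restricted to $U$, carries no scale, and is the algebraic shadow of the vanishing of scaleless integrals). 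This makes the corner monomial of $U$ reducible modulo $G$ to integrals supported on strictly smaller sectors; since every subsector of a scaleless sector is again scaleless, a Noetherian induction over sectors ordered by inclusion then shows that every monomial supported on $U$ reduces modulo $G$ to $0$ after the corner specialization. Applying this with $U=\operatorname{pos}(\mathbf{w}(\mathbf{i}))$ gives formal scalelessness of $D_1^{i_1}\cdots D_n^{i_n}$.

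For the converse ``formally scaleless $\Rightarrow$ scaleless'' I would argue by contraposition: assuming $U=\operatorname{pos}(\mathbf{w}(\mathbf{i}))$ is not scaleless, I want to show that at least one evaluated coefficient of $\operatorname{NF}_G(D_1^{i_1}\cdots D_n^{i_n})$ is nonzero. The loop-momentum IBP relations are triangular with respect to the partial order on sectors, so the reduction of $I(\mathbf{w}(\mathbf{i}))$ only involves standard monomials lying in $U$ or in subsectors of $U$; what remains is to show these cannot all cancel. I would establish this analytically rather than algebraically: attach to the non-scaleless sector $U$ a nonzero ``maximal-cut'' (period) functional that is annihilated by $I_\mathrm{IBP}$ and is nonvanishing on the corner integral of $U$, so that — by triangularity — it is nonvanishing on a leading piece of the reduction of $I(\mathbf{w}(\mathbf{i}))$, and propagate this statement downward through the subsectors, using at the bottom that a sector all of whose subsectors are scaleless is itself scaleless, so the cascade cannot terminate in an all-zero answer for a non-scaleless $U$. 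This last step is the one I expect to be the main obstacle: in the presence of the $\mathbb{F}(s_1,\dots,s_q)$-linear symmetry relations among master integrals flagged in \Cref{rmrk:standard_monomials}, one cannot simply invoke linear independence of the masters to forbid an ``accidental'' reduction of a non-scaleless integral to zero, so some genuinely transcendental input (a nonvanishing cut, or a period-theoretic nondegeneracy statement) seems unavoidable, and making it uniform over all integral families and all sectors is precisely what keeps \Cref{conj:scaleless} a conjecture rather than a theorem.
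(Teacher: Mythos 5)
The statement you are proving is stated in the paper as \Cref{conj:scaleless} and is left open there: the paper offers no proof, only the definition of formally scaleless monomials and example-by-example verifications. So there is no argument of the paper's to compare against, and what you have written is, by your own admission in the final sentences, a programme rather than a proof. Both implications rest on unproved lemmas. For ``scaleless $\Rightarrow$ formally scaleless'' the entire content is hidden in your lemma that for a scaleless sector $U$ the dilatation operator $r(C)$ (with $v_i=\ell_i$) yields an element of $I_\mathrm{IBP}$ equal to a nonzero $K$-multiple of the corner monomial of $U$ plus terms in strictly smaller sectors. The physical scaling argument shows that the \emph{integral} vanishes in dimensional regularization; converting this into membership of such an element in the left ideal generated by $r_1,\ldots,r_{L(L+E)}$ is precisely the algebraic statement that needs proof, and ``all mass/invariant corrections absent because $U$ is scaleless'' is an assertion about the restriction of the Symanzik polynomial, not about the shift-algebra ideal. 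A further slip: you invoke \Cref{rmrk:aposteriori} to justify that the $K$-coefficients of $\operatorname{NF}_G(D_1^{i_1}\cdots D_n^{i_n})$ have no pole at the corner evaluation $z_1=\ldots=z_{n-u}=1$, $z_{n-u+1}=\ldots=z_n=0$; that remark concerns only the first-order normal forms $\operatorname{NF}_G(a_iD_i^{(-)})$ and is itself verified only a posteriori in examples, so regularity of the coefficients at the corner point --- which is tacitly required even to apply \Cref{defn:scaleless} --- is not covered by it.

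For the converse you rely on (i) a triangularity claim for the Gr\"obner reduction with respect to the sector order, which is plausible (the factor $a_c$ in front of $D_c^-$ vanishes at $z_c=0$) but not proved for reductions modulo an arbitrary Gr\"obner basis $G$ over $Y$, where coefficients in $K$ may have poles and the reduced generators need not respect sectors; and (ii) the existence of a nonzero functional (maximal cut / period) annihilated by $I_\mathrm{IBP}$ and nonvanishing on the corner integral of every non-scaleless sector, together with a cascade argument through subsectors. As you note yourself, the $\mathbb{F}(s_1,\ldots,s_q)$-linear relations among master integrals flagged in \Cref{rmrk:standard_monomials} prevent the naive linear-independence conclusion, and no construction of the required functional, nor a proof that the cascade cannot terminate in an accidental cancellation, is given. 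These are exactly the missing ideas that keep the statement a conjecture; until the ideal-membership lemma in the first direction and the nondegeneracy input in the second are actually established, the proposal does not constitute a proof.
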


It hence suffices to look for the formally scaleless monomials among those of the form $D_1^{i_1} \cdots D_n^{i_n}$ for $i_j \in \{0,1\}$.
These correspond to the corner integrals of the respective scaleless sectors defined by the subset $\{ j \mid i_j = 1 \}$.

\subsection{Symmetries}

An integral family can have symmetries which follow from linear shifts of all momenta (loop and possibly external momenta).
Consider a linear transformation of the form
\begin{align}
 v^\prime = v \begin{pmatrix}
    M_{LL} & 0\\
    M_{EL} & M_{EE}
  \end{pmatrix}
  \label{eq:sym_trafo}
\end{align}
with $v \coloneqq ( \ell_1, \ldots , \ell_L ,k_1, \ldots ,k_E)$ and $M_{LL} \in \mathbb{Q}^{L \times L}$, $M_{EE} \in \mathbb{Q}^{E \times E}$ and $M_{EL} \in \mathbb{Q}^{E \times L}$. Such a transformation realizes a \textbf{symmetry} if
\begin{itemize}
 \item $M_{LL} \in \operatorname{GL}^{}_L(\mathbb{Q})$ and $M_{EE} \in \operatorname{GL}_E(\mathbb{Q})$;
 \item $S_i(k_1, \ldots ,k_E)$ = $S_i(k_1^\prime, \ldots ,k_E^\prime)$, i.e., all extra Lorentz invariants remain unchanged;
 \item all propagators of the transformed integral must be again in the set of allowed propagators of the integral family.
\end{itemize}
The first two conditions ensure that the transformation leaves the evaluated integral invariant (modulo a global factor given by the Jacobian $|\det M_{LL}|$).
However, the integrand might change due to the transformation and with the last condition we ensure that the transformed integrand can be expressed as linear combination of integrands of the same integral family.

Note that for the performance of Laporta's algorithm it is important to include the information on the scaleless sectors and the symmetries while generating the linear system since it reduces the number of variables and the size of the system.

\section{Examples} \label{sec:examples}

In the following examples we slightly modify the definition \eqref{eq:A} of the ring $A$ by passing to the subring
\begin{align}
  A \coloneqq \mathbb{Q}[d,m_i^2][s_1,\ldots,s_q][a_1,\ldots,a_n] \mbox{.}
\end{align}
This slightly alters the definition of $Y^\mathrm{pol}$ (but not $Y)$ accordingly.

Irreducible numerators do not occur in the examples of this section, i.e., $u=0$.

\begin{exmp}[One-loop bubble]

  The one-loop bubble is defined by the loop momentum $\ell_1$ and the external momentum $k_1$ (in particular, $L=1, E=1$).
  The two internal lines are massless.
  This results in a single independent external kinematic invariant $s = k_1^2$.
\begin{figure}[H]
  \begin{center}
    \includegraphics[width=0.3\textwidth]{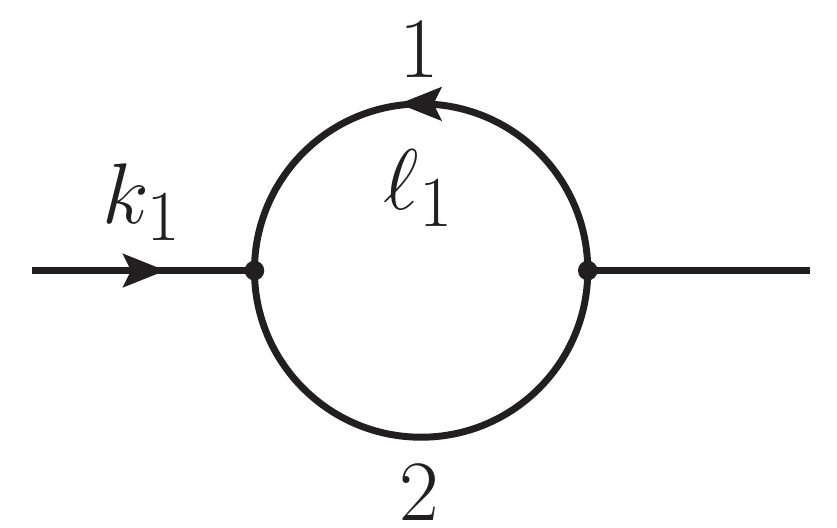}
  \end{center}
  \caption{\label{fig:olbl}The Feynman graph for the one-loop bubble integral. The arrows denote the direction of the corresponding momentum.}
\end{figure}
  The $n=2$ propagators are
  \begin{equation}
  \begin{aligned}
    P_1 &= -\ell_1^2, \\
    P_2 &= -(\ell_1+k_1)^2 \mbox{.}
  \end{aligned}
  \end{equation}
  The following transformation is a symmetry of the loop integral:
  \begin{align}
    M_1 \coloneqq \left(\begin{array}{r|r}
      -1 & \cdot \\
      \hline
      -1 & 1
    \end{array}\right) \leadsto I(z_1,z_2) = I(z_2,z_1) \mbox{,}
  \end{align}
  where the lines indicate the different block matrices in \eqref{eq:sym_trafo} and the dot denotes an entry that is equal to zero.

  The $L(L+E) = 2$ standard IBP relations are
  \begin{equation}
  \begin{aligned}
    r_1
    &= -a_2 D_1 D_2^- - s a_2 D_2^- + (d - 2 a_1 - a_2) , \\
    r_2
    &= -a_1 D_1^- D_2 + a_2 D_1 D_2^- - s a_1 D_1^- + s a_2 D_2^- + (a_1 - a_2 ) \mbox{,}
  \end{aligned}
  \end{equation}
  expressed as elements of the polynomial double-shift algebra
  \begin{align}
    Y^\mathrm{pol} \coloneqq \Q[d,s][a_1,a_2] \langle D_i, D_i^- \mid i = 1, 2 \rangle \mbox{.}
  \end{align}
  
  The reduced Gr\"obner basis $G^\mathrm{pol}$ for the one-loop bubble over the polynomial double-shift algebra $Y^\mathrm{pol}$ has $4$ elements and was computed in less than a second using \textsc{Plural}:
    \allowdisplaybreaks
    \begin{align}
    G^\mathrm{pol} = \bigg\{ &
    (a_2 - 1) D_1 - (d - 2 a_1 - a_2 + 1) D_2 + (a_2 - 1) s, \nonumber\\
  & -(d - a_1 - 2 a_2 + 1) D_1 + (a_1 - 1) D_2 + (a_1 - 1) s, \nonumber\\
  & a_1 (d - 2 a_1 - 2) D_1^- - a_2 (d - 2 a_2 - 2) D_2^-, \nonumber\\
  & (a_2 - 1) (d - 2 a_2) D_1 - (a_1 - 1) (d - 2 a_1) D_2 \bigg\} \mbox{.}
    \end{align}
  
  The reduced Gr\"obner basis over the \emph{rational} double-shift algebra
  \begin{align}
    Y \coloneqq \Q(d,s)(a_1,a_2) \langle D_i, D_i^- \mid i = 1, 2 \rangle
  \end{align}
  was computed in less than a second using $\mathtt{Ore\_algebra}$.
  It also has 4 elements and reads
    \allowdisplaybreaks
    \begin{align}
    G
    = \bigg\{
    & (d-a_1-a_2) (d-2 a_1-2 a_2+2) D_2 - (a_2-1) s (d-2 a_2) , \nonumber \\[0.2em]
    & (d-a_1-a_2) (d-2 a_1-2 a_2+2) D_1 - (a_1-1) s (d-2 a_1) , \nonumber \\[0.2em]
    & a_2  s (d-2 a_2-2) D_2^{-} - (d-a_1-a_2-1) (d-2 a_1-2 a_2), \nonumber \\[0.2em]
    & a_1  s (d-2 a_1-2) D_1^{-} - (d-a_1-a_2-1) (d-2 a_1-2 a_2) \bigg\} \mbox{.} \label{eq:GB1loopbubble}
  \end{align}

We can now compute the normal forms of the operators $a_i D_i^-$ with respect to the Gr\"obner basis $G$ of the left ideal
  \begin{align}
    I_\mathrm{IBP}
    \coloneqq
    \langle r_1, r_2 \rangle \lhd Y
  \end{align}
  generated by the above two standard IBP relations:
  \begin{equation} \label{eq:one-loop-bubble-nf-minus}
  \begin{aligned}
    \operatorname{NF}_G(a_1 D_1^-) =
    & \frac{(d-a_1-a_2-1)(d-2a_1-2a_2)}{(d-2a_1-2) s},
    \\
    \operatorname{NF}_G(a_2 D_2^-) =
    & \frac{(d-a_1-a_2-1)(d-2a_1-2a_2)}{(d-2a_2-2) s}\mbox{.}
  \end{aligned}
  \end{equation}
  The $C_2$-symmetry of the problem is obvious from the above normal forms.
  It further manifests itself in the normal forms of the indeterminates $D_i$:
  {\small
  \begin{equation} \label{eq:one-loop-bubble-nf-plus}
  \begin{aligned}
    \operatorname{NF}_G(1) =\,
    &
    1
    &&\text{$\leadsto 1$ is (trivially) a standard monomial},
    \\
    \operatorname{NF}_G(D_1) =\,
    &
    \frac{(a_1-1)(d-2a_1)s}{(d-a_1-a_2)(d-2a_1-2a_2+2)}
    &&\text{$\leadsto D_1$ is a nonstandard monomial},
    \\
    \operatorname{NF}_G(D_2) =\,
    &
    \frac{(a_2-1)(d-2a_2)s}{(d-a_1-a_2)(d-2a_1-2a_2+2)}
    &&\text{$\leadsto D_2$ is a nonstandard monomial.}
  \end{aligned}
  \end{equation}
  }%
  
  The set of standard monomials with respect to $G$ is merely
  \begin{align}
    \{ 1 \} \mbox{,}
  \end{align}
  which by \Cref{rmrk:standard_monomials} corresponds to the single master integral
  \begin{align} \label{eq:masters_oneloopbubble}
    \{ I(1,1) \} \mbox{.}
  \end{align}
  
Computing normal forms with respect to the Gr\"obner basis $G$ one can easily verify that the minimal scaleless monomials $D_1$, $D_2$ are indeed formally scaleless with respect to $I(1,1)$ in the sense of \Cref{defn:scaleless}.

  As discussed in \Cref{sec:Groebner} the normal form of $a_i D_i^-$ with respect to the \emph{polynomial} Gr\"obner basis $G^\mathrm{pol} \subset Y^\mathrm{pol}$ includes expressions in the $D_j^-$'s.
  Still, the polynomial reduction verifies that for $i=1,2$ the numerator $\operatorname{num}(R_i) \in Y^\mathrm{pol}$ of $R_i \coloneqq a_i D_i^- - \operatorname{NF}_G(a_i D_i^-) \in I_\mathrm{IBP} \lhd Y$ reduces to zero modulo $G^\mathrm{pol}$ in $Y^\mathrm{pol}$, proving the inclusion $\{ \operatorname{num}(R_i) \mid i = 1, 2\} \subset I_\mathrm{IBP}^\mathrm{pol}$.
  These reductions yield as in \Cref{defn:polynomiality} the certificate of polynomiality matrix
  {\small
  \begin{align}
  \tau
  =
  \begin{pmatrix*}[r]
  s (a_1 D_1^- -a_2 D_2^-) -a_2 D_1 D_2^- -d+a_1+2 a_2+1 &-s a_2 D_2^- -a_2 D_1 D_2^- \\
-a_2 D_1 D_2^- -d +a_1+2 a_2+1 & -a_2 D_1 D_2^-
  \end{pmatrix*} \in (Y^\mathrm{pol})^{2 \times 2}
  \end{align}}%
  satisfying
  \begin{align} \label{eq:tau_bubble}
    N \coloneqq \begin{pmatrix}
      \operatorname{num}(R_1) \\
      \operatorname{num}(R_2)
    \end{pmatrix}
    =
    \tau
    \begin{pmatrix}
      r_1 \\
      r_2
    \end{pmatrix} \mbox{.}
  \end{align}
  The following nontrivial row-syzygy
  {\small
  \begin{align}
    \begin{pmatrix}
      s a_1 D_1^- + a_1 D_2 D_1^- -s a_2 D_2^- - a_2 D_1 D_2^- - a_1 + a_2 &
      -s a_2 D_2^- - a_2 D_1 D_2^- + d -2 a_1 -a_2-1
    \end{pmatrix}
  \end{align}}
  in $(Y^\mathrm{pol})^{1 \times 2}$ of $\left(\begin{smallmatrix} r_1 \\ r_2 \end{smallmatrix}\right)$ shows that \eqref{eq:tau_bubble} does not uniquely determine $\tau$.
  
  The computation of the Gr\"obner basis of $N$ in $Y^\mathrm{pol}$ verifies that $N$ does not generate $I_\mathrm{IBP}^\mathrm{pol}$.
  However, $r_1, r_2$ reduce to zero modulo $N$ in $Y$, yielding the certificate matrix
  {\footnotesize
  \begin{align}
    \eta
    :=
    \frac{1}{(d-a_1-a_2-1)(d-2 a_1-2 a_2)}
    \begin{pmatrix*}[r]
      a_2 D_1 D_2^- & -d + 2 a_1 + a_2 \\
      -d + 2 a_1 + a_2 + a_2 D_1 D_2^- & a_1 D_2 D_1^- + d - 2 a_1 - a_2
    \end{pmatrix*} \in Y^{2 \times 2} \mbox{,}
  \end{align}}
  of first-order generation from \Cref{defn:eta}, proving that $\langle R_1, R_2 \rangle_Y = I_\mathrm{IBP} \coloneqq \langle r_1, r_2 \rangle_Y$

  As mentioned in \Cref{rmrk:genuine}, the numerators of the normal-form IBP relations $R_1, R_2$ cannot lie in the image of the map $r$, since it is obvious from \eqref{eq:IBP} that the IBP relations in the image of $r$ are affine expressions in the $a_i$'s.
  
We can now verify that the normal forms in \eqref{eq:one-loop-bubble-nf-minus} and \eqref{eq:one-loop-bubble-nf-plus} reflect the functional dependence of the integral's closed-form result on the indices $z_1$ and $z_2$, analogously to \eqref{eq:contiguous_one-loop-tadpole}.
Using
\begin{equation}
\begin{aligned}
  I(z_1, z_2)
  &=
    \int \operatorname{d}^d \ell_1 \frac{1}{\left(-\ell_1^2 \right)^{z_1} \left(-(\ell_1 + k_1)^2 \right)^{z_2}} \\
  &=
    i\pi^{d/2}\,
    \frac{
    \Gamma\left(\frac{d}{2} - z_1\right)\,
    \Gamma\left(\frac{d}{2} - z_2\right)\,
    \Gamma\left(z_1 + z_2 - \frac{d}{2}\right)}{
    \Gamma(z_1)\,
    \Gamma(z_2)\,
    \Gamma(d - z_1 - z_2)\,
    (-s)^{z_1 + z_2 - d/2}}
\end{aligned}
\end{equation}
and $\Gamma(z+1)=z\,\Gamma(z)$ we have, for example, the \emph{contiguous function relation}
\begin{align} \label{eq:contiguous_bubble}
  z_1\, I(z_1 + 1, z_2) =
  \frac{(d-z_1-z_2-1)(d-2z_1-2z_2)}{(d-2z_1-2) s}\, I(z_1, z_2)
  \mbox{,}
\end{align}
which corresponds to $\operatorname{NF}_G(a_1 D_1^-)$ in \eqref{eq:one-loop-bubble-nf-minus}.
  
  Note that 1 is always a standard monomial, unless the left ideal is the unit left ideal.
  This happens for example for the \emph{scaleless} one-loop bubble (with $s=k_1^2=0$).
  There $G^\mathrm{pol}$ has five elements but $G = \{1\}$, i.e., $I_\mathrm{IBP} = Y$.
This means that the set of standard monomials is empty ($\operatorname{NF}_G(1) = 0$), i.e., there are no master integrals as expected for a scaleless integral family (contrary to what is claimed in \cite{Gerdt:2004kt}).

\end{exmp}

\begin{exmp}[One-loop box] \label{exmp:one-loopbox}

  The one-loop box is defined by the loop momentum $\ell_1$ and linearly dependent external momenta $k_1, k_2, k_3, k_4$.
  Due to momentum conservation $k_1+k_2+k_3+k_4 = 0$ we take $k_1, k_2, k_4$ to be the set of linearly independent external momenta (in particular, $L=1, E=3$).
  The external lines are on-shell and massless implying $k_i^2=0$ for $i=1,2,3,4$.
  Internal lines are also massless.
  This results in the independent external kinematic invariants $s_{12} = 2 k_1 \cdot k_2$ and $s_{14} = 2 k_1 \cdot k_4$.
\begin{figure}[H]
  \begin{center}
    \includegraphics[width=0.3\textwidth]{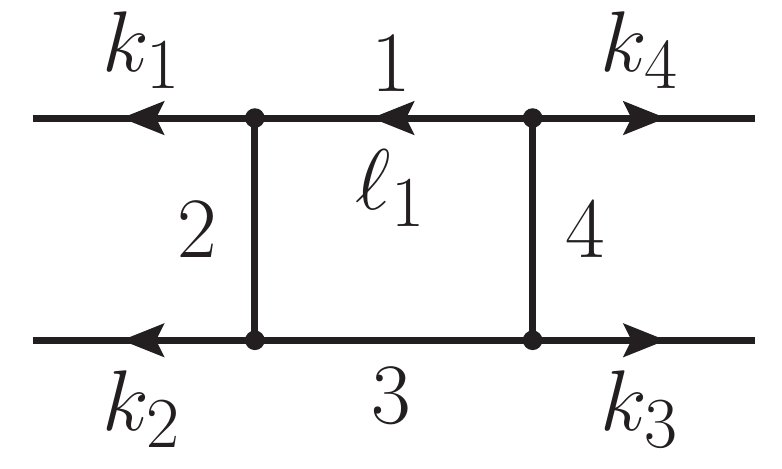}
  \end{center}
  \caption{\label{fig:olb}The Feynman graph for the one-loop box integral. The arrows denote the direction of the corresponding momentum.}
\end{figure}

  The $n=4$ propagators are
  \begin{equation}
  \begin{aligned}
    P_1 &= -\ell_1^2, \\
    P_2 &= -(\ell_1-k_1)^2, \\
    P_3 &= -(\ell_1 - k_1 - k_2)^2, \\
    P_4 &= -(\ell_1 + k_4)^2 \mbox{.}
  \end{aligned}
  \end{equation}
  The following two involutions generate a Kleinian symmetry group $V_4$ of the loop integral:
  \begin{align}
    M_1 \coloneqq \left(\begin{array}{r|rrr}
      -1 & \cdot & \cdot & \cdot \\
      \hline
      \cdot & \cdot & -1 & 1 \\
      \cdot & \cdot & -1 & \cdot \\
      \cdot & 1 & -1 & \cdot
    \end{array}\right) \leadsto I(z_1,z_2,z_3,z_4) = I(z_1,z_4,z_3,z_2),
    \\
    M_2 \coloneqq \left(\begin{array}{r|rrr}
      -1 & \cdot & \cdot & \cdot \\
      \hline
      1 & \cdot & 1 & -1 \\
      1 & 1 & \cdot & -1 \\
      \cdot & \cdot & \cdot & -1
    \end{array}\right) \leadsto I(z_1,z_2,z_3,z_4) = I(z_3,z_2,z_1,z_4) \mbox{.}
  \end{align}
  Their product yields
  \begin{align}
     M_1 M_2 \eqqcolon M_3 = \left(\begin{array}{r|rrr}
      1 & \cdot & \cdot & \cdot \\
      \hline
      -1 & -1 & \cdot & \cdot \\
      -1 & -1 & \cdot & 1 \\
      \cdot & -1 & 1 & \cdot
    \end{array}\right) \leadsto I(z_1,z_2,z_3,z_4) = I(z_3,z_4,z_1,z_2) \mbox{.}
  \end{align}
  
  For completeness, we mention that additional symmetry relations can be found when some of the indices are negative, e.g., for $z_1 \leq 0,z_3 \leq 0$ the matrix
  \begin{align}
    \left(\begin{array}{r|rrr}
      -1 & \cdot & \cdot & \cdot \\
      \hline
      1 & 1 & \cdot & \cdot \\
      \cdot & \cdot & 1 & \cdot \\
      -1 & \cdot & \cdot & 1
    \end{array}\right)
  \end{align}
  encodes the symmetry
  \begin{align}
    I(z_1,z_2,z_3,z_4) = I(0,z_4,0,z_2) \bullet (s_{14}-D_1+D_2+D_4)^{-z_1} (s_{14}+D_2-D_3+D_4)^{-z_3} \mbox{.}
  \end{align}
  Note that in this example this symmetry is realized in a simpler form by $M_1$.
  
  \medskip
  The $L(L+E) = 4$ standard IBP relations are
  \begin{equation}
  {\tiny
  \begin{aligned}
    r_1
    &= -a_2 D_1 D_2^- -a_3 D_1 D_3^- -a_4 D_1 D_4^- -s_{12} a_3 D_3^- + (d-2a_1-a_2-a_3-a_4), \\
    r_2
    &= a_1 D_1^- D_2 - a_2 D_1 D_2^- -a_3 D_1 D_3^- + a_3 D_2 D_3^- - a_4 D_1 D_4^- + a_4 D_2 D_4^- -s_{12} a_3 D_3^- + s_{14} a_4 D_4^- -a_1+a_2, \\
    r_3
    &= - a_1 D_1^- D_2 + a_1 D_1^- D_3 + a_2 D_2^- D_3 - a_3 D_2 D_3^- - a_4 D_2 D_4^- + a_4 D_3 D_4^- + s_{12} a_1 D_1^- - s_{14} a_4 D_4^- -a_2 + a_3, \\
    r_4
    &= a_2 D_1 D_2^- + a_3 D_1 D_3^- - a_1 D_1^- D_4 - a_2 D_2^- D_4 - a_3 D_3^- D_4 + a_4 D_1 D_4^- - s_{14} a_2 D_2^- + s_{12} a_3 D_3^- +a_1-a_4 \mbox{,}
  \end{aligned}
  }
  \end{equation}
  expressed as elements of the polynomial double-shift algebra
  \begin{align}
    Y^\mathrm{pol} \coloneqq \Q[d,s_{12},s_{14}][a_1,a_2,a_3,a_4] \langle D_i, D_i^- \mid i = 1, \ldots, 4 \rangle \mbox{.}
  \end{align}
  
  The reduced Gr\"obner basis $G^\mathrm{pol}$ for the one-loop box over the polynomial double-shift algebra $Y^\mathrm{pol}$ has $28$ elements and was computed in less than a second using \textsc{Plural}.
  The reduced Gr\"obner basis over the \emph{rational} double-shift algebra
  \begin{align}
    Y \coloneqq \Q(d,s_{12},s_{14})(a_1,a_2,a_3,a_4) \langle D_i, D_i^- \mid i = 1, \ldots, 4 \rangle
  \end{align}
  was computed in less than five seconds using $\mathtt{Ore\_algebra}$.
  It has 9 elements and reads
  {\footnotesize
    \allowdisplaybreaks
    \begin{align}
    G&
    = \bigg\{ D_4-D_2+\frac{(a_2-a_4) s_{14}}{d-a_{1234}}, D_3-D_1+\frac{(a_1-a_3) s_{12}}{d-a_{1234}}, \nonumber \\[0.4em]
 & 4 (a_2-1) (d-a_{1234}) D_3-2 (d-2a_{134}) (d-a_{1234}) D_4 +(d-2a_{14}-2) (d-2a_{234}) s_{12}  \nonumber\\[0.2em]
 & -2 (d-2a_{134}) (a_2-a_4) s_{14} -\frac{(d-2a_{14}-2) (d-2a_{34}-2) a_4 s_{12} s_{14}}{d-a_{1234}-1} \, D_4^{-}, \nonumber \\[0.4em]
 & -2 (d-2a_{234}) (d-a_{1234}) D_3+4 (a_1-1) (d-a_{1234}) D_4-2 (a_1-a_3) (d-2a_{234}) s_{12} \nonumber\\[0.2em]
 & +(d-2a_{23}-2) (d-2a_{134}) s_{14} -\frac{(d-2a_{23}-2) a_3 (d-2a_{34}-2) s_{12} s_{14}}{d-a_{1234}-1} \, D_3^{-}, \nonumber \\[0.4em]
 & 4 (d-a_{1234}) (a_4-1) D_3-2 (d-2a_{123}) (d-a_{1234}) D_4\nonumber\\[0.2em]
 & +(d-2a_{12}-2) (d-2a_{234}) s_{12} -\frac{(d-2a_{12}-2) a_2 (d-2a_{23}-2) s_{12} s_{14}}{d-a_{1234}-1} \, D_2^{-}, \nonumber \\[0.4em]
 & -2 (d-2a_{124}) (d-a_{1234}) D_3+4 (a_3-1) (d-a_{1234}) D_4\nonumber\\[0.2em]
 &+(d-2a_{12}-2) (d-2a_{134}) s_{14} -\frac{a_1 (d-2a_{12}-2) (d-2a_{14}-2) s_{12} s_{14}}{d-a_{1234}-1} \, D_1^{-}, \nonumber \\[0.4em]
 & -2 (d-2a_{1234}+4) (d-a_{1234}+1) D_4^2 +(d-2a_{124}+2) (d-2a_{234}+2) s_{12} D_4\nonumber\\[0.2em]
 &-2 (d-2a_{1234}+4) (a_2-a_4+1) s_{14} D_4+4 (a_2-1) (a_4-1) s_{14} D_3\nonumber\\[0.2em]
 & -\frac{(d-2a_{124}+2) (d-2a_{34}) (a_4-1) s_{12} s_{14}}{d-a_{1234}}, \nonumber \\[0.4em]
 & -(d-2a_{1234}+4) (d-a_{1234}+1) D_3 D_4 +(a_3-1) (d-2a_{234}+2) s_{12} D_4 \nonumber\\[0.2em]
 &+(d-2a_{134}+2) (a_4-1) s_{14} D_3 -\frac{(a_3-1) (d-2a_{34}) (a_4-1) s_{12} s_{14}}{d-a_{1234}}, \nonumber \\[0.4em]
 & -2 (d-2a_{1234}+4) (d-a_{1234}+1) D_3^2 +(d-2a_{123}+2) (d-2a_{134}+2) s_{14} D_3\nonumber\\[0.2em]
 & - 2 (a_1-a_3+1) (d-2a_{1234}+4) s_{12} D_3+4 (a_1-1) (a_3-1) s_{12} D_4 \nonumber\\[0.2em]
 &-\frac{(d-2a_{123}+2) (a_3-1) (d-2a_{34}) s_{12} s_{14}}{d-a_{1234}}\bigg\} \mbox{,}
  \end{align}}%
  with the abbreviations $a_{i_1 \ldots i_k} \coloneqq \sum_{j=1}^k a_{i_j}$.
  We provide both $G$ and $G^\mathrm{pol}$  in computer-readable form in the ancillary files.
  
  We can now compute the normal forms of the operators $a_i D_i^-$ with respect to the Gr\"obner basis $G$ of the left ideal
  \begin{align}
    I_\mathrm{IBP}
    \coloneqq
    \langle r_i \mid i = 1, \ldots, 4 \rangle \lhd Y
  \end{align}
  generated by the above four standard IBP relations:
  {\scriptsize
  \begin{align}
    \operatorname{NF}_G(a_1 D_1^-) =
    &
    -\frac{2\left(d-2a_{124}\right) \left(d-a_{1234}\right) \left(d-a_{1234}-1\right)}{\left(d-2a_{12}-2\right) \left(d-2a_{14}-2\right) s_{12} s_{14}} D_3 + \frac{4 \left(a_3-1\right) \left(d-a_{1234}\right) \left(d-a_{1234}-1\right)}{\left(d-2a_{12}-2\right) \left(d-2a_{14}-2\right) s_{12} s_{14}} D_4 \notag \\
   & + \frac{\left(d-2a_{134}\right) \left(d-a_{1234}-1\right)}{\left(d-2a_{14}-2\right) s_{12}},
    \notag \\
    \operatorname{NF}_G(a_2 D_2^-) =
    &
    \frac{4 \left(a_4-1\right) \left(d-a_{1234}\right) \left(d-a_{1234}-1\right)}{\left(d-2a_{12}-2\right)
   \left(d-2a_{23}-2\right) s_{12} s_{14}} D_3 -\frac{2 \left(d-2a_{123}\right) \left(d-a_{1234}\right) \left(d-a_{1234}-1\right)}{\left(d-2a_{12}-2\right) \left(d-2a_{23}-2\right) s_{12} s_{14}} D_4 \notag \\
   &+ \frac{\left(d-2a_{234}\right) \left(d-a_{1234}-1\right)}{\left(d-2a_{23}-2\right) s_{14}},
    \notag \\
    \operatorname{NF}_G(a_3 D_3^-) =
    &
    -\frac{2 \left(d-2a_{234}\right) \left(d-a_{1234}\right) \left(d-a_{1234}-1\right)}{\left(d-2 a_{23}-2\right) \left(d-2a_{34}-2\right) s_{12} s_{14}} D_3
    +\frac{4 \left(a_1-1\right) \left(d-a_{1234}\right) \left(d-a_{1234}-1\right)}{\left(d-2a_{23}-2\right) \left(d-2a_{34}-2\right) s_{12} s_{14}} D_4
    \notag \\
    &+\frac{\left(d-2a_{134}\right) \left(d-a_{1234}-1\right)}{\left(d-2a_{34}-2\right) s_{12}}-\frac{2 \left(a_1-a_3\right)
   \left(d-2a_{234}\right) \left(d-a_{1234}-1\right)}{\left(d-2a_{23}-2\right) \left(d-2a_{34}-2\right)
   s_{14}},
    \notag \\
    \operatorname{NF}_G(a_4 D_4^-) =
    &
    \frac{4 \left(a_2-1\right) \left(d-a_{1234}\right) \left(d-a_{1234}-1\right)}{\left(d-2a_{14}-2\right) \left(d-2a_{34}-2\right) s_{12} s_{14}} D_3
   -\frac{2 \left(d-2a_{134}\right) \left(d-a_{1234}\right) \left(d-a_{1234}-1\right)}{\left(d-2a_{14}-2\right) \left(d-2a_{34}-2\right) s_{12} s_{14}} D_4 \notag \\
    &
    +\frac{\left(d-2a_{234}\right) \left(d-a_{1234}-1\right)}{\left(d-2a_{34}-2\right) s_{14}}-\frac{2 \left(a_2-a_4\right) \left(d-2a_{134}\right) \left(d-a_{1234}-1\right)}{\left(d-2a_{14}-2\right) \left(d-2a_{34}-2\right) s_{12}} \mbox{.} \label{eq:one-loop-box}
    \end{align}
  }%
  
  The normal forms of the indeterminates $D_i$ reveal the $V_4$-symmetry of the problem:
  \begin{align}
    \operatorname{NF}_G(D_1) =\,
    &
    D_3+\frac{(a_1-a_3)s_{12}}{d-a_{1234}}
    &&\text{$\leadsto D_1$ is a nonstandard monomial}
    \notag
    \\
    \operatorname{NF}_G(D_2) =\,
    &
    D_4+\frac{(a_2-a_4)s_{14}}{d-a_{1234}}
    &&\text{$\leadsto D_2$ is a nonstandard monomial}
    \notag
    \\
    \operatorname{NF}_G(D_3) =\,
    &
    D_3
    &&\text{$\leadsto D_3$ is a standard monomial}
    \\
    \operatorname{NF}_G(D_4) =\,
    &
    D_4
    &&\text{$\leadsto D_4$ is a standard monomial}
    \notag
    \\
    \operatorname{NF}_G(1) =\,
    &
    1
    &&\text{$\leadsto 1$ is (trivially) a standard monomial}
    \notag
  \end{align}
  
  The set of standard monomials with respect to $G$ is
  \begin{align}
    \{ 1, D_3, D_4 \} \mbox{,}
  \end{align}
  which by \Cref{rmrk:standard_monomials} correspond to the three master integrals
  \begin{align} \label{eq:masters_oneloopbox}
    \{ I(1,1,1,1), I(1,1,0,1), I(1,1,1,0) \} \mbox{.}
  \end{align}

  Computing normal forms with respect to the Gr\"obner basis $G$ one can easily verify that the minimal scaleless monomials $D_1 D_2$, $D_1 D_4$, $D_2 D_3$, $D_3 D_4$ are indeed formally scaleless with respect to $I(1,1,1,1)$ in the sense of \Cref{defn:scaleless}.
  The above computations can be found in the notebook \cite{1LoopBox}.
  
  As discussed in \Cref{sec:Groebner} the normal form of $a_i D_i^-$ with respect to the \emph{polynomial} Gr\"obner basis $G^\mathrm{pol} \subset Y^\mathrm{pol}$ includes expressions in the $D_j^-$'s.
  Still, the polynomial reduction verifies that for $i=1,\ldots,4$ the numerator $\operatorname{num}(R_i) \in Y^\mathrm{pol}$ of $R_i \coloneqq a_i D_i^- - \operatorname{NF}_G(a_i D_i^-) \in I_\mathrm{IBP} \lhd Y$ reduces to zero modulo $G^\mathrm{pol}$ in $Y^\mathrm{pol}$, proving the inclusion $\{ \operatorname{num}(R_i) \mid i = 1, \ldots, 4\} \subset I_\mathrm{IBP}^\mathrm{pol}$.
  These reductions yield, as in \Cref{defn:polynomiality}, the certificate of polynomiality matrix $\tau \in (Y^\mathrm{pol})^{4 \times 4}$.
  The matrix $\tau$ is included in the digital paper supplements.
    The following nontrivial syzygy relation $f_1 r_1 + f_4 r_4 = 0$ of $\left(\begin{smallmatrix} r_1 \\ \vdots \\ r_4 \end{smallmatrix}\right)$ with
  \begin{equation}
  {\scriptsize
  \begin{aligned}
    f_1 =&
    s_{14} a_2 D_2^- -a_2 D_1 D_2^- -s_{12} a_3 D_3^- -a_3 D_1 D_3^- +a_1 D_4 D_1^- +a_2 D_4 D_2^ - +a_3 D_4 D_3^- -a_4 D_1 D_4^- -a_1+a_4,
    \\
    f_4 =&
    -a_2 D_1 D_2^- -s_{12} a_3 D_3^- -a_3 D_1 D_3^- -a_4 D_1 D_4^- +d-2 a_1-a_2-a_3-a_4-1
  \end{aligned}}%
  \end{equation}
shows that the matrix $\tau$ is not uniquely determined.
Furthermore, we have verified that the (normalized) reduced minimal Gr\"obner bases of $\{r_1, \ldots, r_4\}$ and $\{ \operatorname{num}(R_1), \ldots, \operatorname{num}(R_4) \}$ over $Y$ coincide, proving that the latter (or equivalently $\{ R_1, \ldots, R_4 \}$) generates the left ideal $I_\mathrm{IBP} \lhd Y$.
  
  The rational Gr\"obner basis $G$ can be used to perform fast reductions.
  The ancillary files of the arXiv submission include a \textsc{FORM}~\cite{Ruijl:2017dtg} program and a \textsc{Mathematica} program which compute normal forms modulo $G$.
  In order to express the integral $I(z_1,\ldots, z_4)$ in terms of the above three master integrals \eqref{eq:masters_oneloopbox} the program computes the normal form of the monomial $(D_1^-)^{z_1} (D_2^-)^{z_2} (D_3^-)^{z_3} (D_4^-)^{z_4} \in Y$ modulo $G$.
  For example, the provided \textsc{FORM}-program is able to express $I(10,10,10,10)$ in terms of the master integrals in about 5 seconds.
  
  We also provide a \textsc{Mathematica} program which uses the four normal-form IBPs $\{ R_i \coloneqq a_i D_i^- - \operatorname{NF}_G(a_i D_i^-) \mid i = 1, \ldots, 4 \} \subset I_\mathrm{IBP} \lhd Y$ to perform the reduction.
This program illustrates that the normal form IBPs can be used to construct an algorithm for the reduction to master integrals similarly to the standard IBPs.
However, due to the much simpler structure of the normal form IBPs, it is much easier to design the reduction algorithm.
In particular, they already have the correct form to reduce integrals in the top-level sector, where all indices $z_i$ with $i\in\{1,\ldots,n-u\}=\{1,\ldots,4-0\}$ are positive.

\end{exmp}

\begin{exmp}[Two-loop tadpole with three massive lines] \label{exmp:2LoopTadpole}

  The two-loop tadpole with three massive lines is defined by the loop momenta $\ell_1,\ell_2$ (in particular, $L=2, E=0$).
  The three internal lines are massive with equal mass $m$.
\begin{figure}[H]
  \begin{center}
    \includegraphics[width=0.2\textwidth]{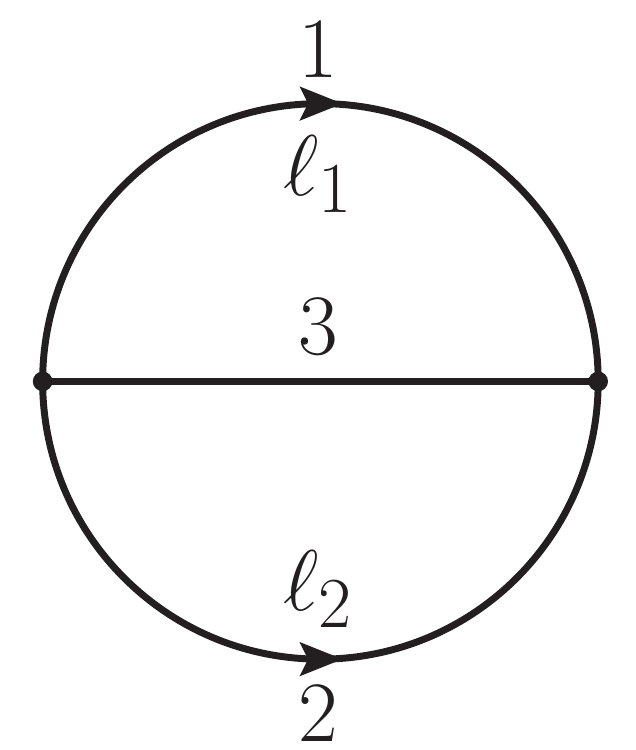}
  \end{center}
  \caption{\label{fig:tlt}The Feynman graph for the two-loop tadpole integral with three massive lines. The arrows denote the direction of the corresponding momentum.}
\end{figure}
  The $n=3$ propagators are
  \begin{equation}
  \begin{aligned}
    P_1 &= -\ell_1^2+m^2, \\
    P_2 &= -\ell_2^2+m^2, \\
    P_3 &= -(\ell_1+\ell_2)^2+m^2 \mbox{.}
  \end{aligned}
  \end{equation}
  The symmetric group $S_3$ is a symmetry group with the following three transpositions as generators:
  \begin{align}
    M_1 \coloneqq
    \left(\begin{array}{rr|}
      \cdot & 1 \\
      1 & \cdot \\
      \hline
    \end{array}\right)
    \leadsto I(z_1,z_2,z_3) = I(z_2,z_1,z_3),
    \\
    M_2 \coloneqq
    \left(\begin{array}{rr|}
      -1 & \cdot \\
      -1 & 1 \\
      \hline
    \end{array}\right)
    \leadsto I(z_1,z_2,z_3) = I(z_3,z_2,z_1),
    \\
    M_3 \coloneqq
    \left(\begin{array}{rr|}
      1 & -1 \\
      \cdot & -1 \\
      \hline
    \end{array}\right)
    \leadsto I(z_1,z_2,z_3) = I(z_1,z_3,z_2) \mbox{.}
  \end{align}
  The $L(L+E) = 4$ standard IBP relations are
  \begin{equation}
  \begin{aligned}
    r_1
    &= 2 m^2 a_1 D_1^ - +m^2 a_3 D_3^- -a_3 D_1 D_3^ - +a_3 D_2 D_3^ - +d-2 a_1-a_3 , \\
    r_2
    &= -m^2 a_1 D_1^ - +m^2 a_3 D_3^ - +a_1 D_1^- D_2-a_1 D_1^- D_3+a_3 D_1 D_3^- -a_3 D_2 D_3^ - +a_1-a_3, \\
    r_3
    &= -m^2 a_2 D_2^ - +m^2 a_3 D_3^ - +a_2 D_1 D_2^- -a_2 D_2^- D_3-a_3 D_1 D_3^ - +a_3 D_2 D_3^ - +a_2-a_3, \\
    r_4
    &= 2 m^2 a_2 D_2^ - +m^2 a_3 D_3^ - +a_3 D_1 D_3^- -a_3 D_2 D_3^ - +d-2 a_2-a_3 \mbox{,}
  \end{aligned}
  \end{equation}
  expressed as elements of the polynomial double-shift algebra
  \begin{align}
    Y^\mathrm{pol} \coloneqq \Q[d,m][a_1,a_2,a_3] \langle D_i, D_i^- \mid i = 1, 2, 3 \rangle \mbox{.}
  \end{align}
  
  The reduced Gr\"obner basis $G^\mathrm{pol}$ for the two-loop tadpole with three massive lines over the polynomial double-shift algebra $Y^\mathrm{pol}$ has $47$ elements and was computed in a few seconds using \textsc{Plural}.  
  The reduced Gr\"obner basis over the \emph{rational} double-shift algebra
  \begin{align}
    Y \coloneqq \Q(d,m)(a_1,a_2,a_3) \langle D_i, D_i^- \mid i = 1, 2, 3 \rangle
  \end{align}
  was also computed in a few seconds using $\mathtt{Ore\_algebra}$ and has 9 elements.
  Both $G$ and $G^\mathrm{pol}$ are included in the ancillary files.

We can now compute the normal forms of the operators $a_i D_i^-$ with respect to the Gr\"obner basis $G$ of the left ideal
  \begin{align}
    I_\mathrm{IBP}
    \coloneqq
    \langle r_1, \ldots, r_4 \rangle \lhd Y
  \end{align}
  generated by the above four standard IBP relations:
  {\scriptsize
  \begin{align}
    &\operatorname{NF}_G(a_1 D_1^-) = \notag \\
    &\frac{(-2 d+2 a_1+3 a_2+3 a_3-4) D_1+(d-a_2-3 a_3+2) D_2+(d-3 a_2-a_3+2) D_3-m^2 (2 d-4 a_1-a_2-a_3)}{6m^2} \mbox{,}
    \notag \\
    &\operatorname{NF}_G(a_2 D_2^-) = \notag \\
    &\frac{(d-a_1-3 a_3+2) D_1+(-2 d+3 a_1+2 a_2+3 a_3-4) D_2+(d-3 a_1-a_3+2) D_3-m^2 (2 d-a_1-4 a_2-a_3)}{6m^2} \mbox{,}
    \notag \\
    &\operatorname{NF}_G(a_3 D_3^-) = \notag \\
    &\frac{(d-a_1-3 a_2+2) D_1+(d-3 a_1-a_2+2) D_2+(-2 d+3 a_1+3 a_2+2 a_3-4) D_3-m^2 (2 d-a_1-a_2-4 a_3)}{6m^2} \mbox{.}\notag \\
  \end{align}}
  The $S_3$-symmetry of the problem is obvious from the above normal forms.
  {\small
  \begin{equation}
  \begin{aligned}
    \operatorname{NF}_G(1) =\,
    &
    1
    &&\text{$\leadsto 1$ is (trivially) a standard monomial},
    \\
    \operatorname{NF}_G(D_1) =\,
    &
    D_1
    &&\text{$\leadsto D_1$ is a standard monomial},
    \\
    \operatorname{NF}_G(D_2) =\,
    &
    D_2
    &&\text{$\leadsto D_2$ is a standard monomial},
    \\
    \operatorname{NF}_G(D_2) =\,
    &
    D_3
    &&\text{$\leadsto D_3$ is a standard monomial.}
  \end{aligned}
  \end{equation}
  }%
  
  The set of standard monomials with respect to $G$ is
  \begin{align}
    \{ 1, D_1, D_2, D_3 \} \mbox{,}
  \end{align}
  which by \Cref{rmrk:standard_monomials} corresponds to the master integrals
  \begin{align} \label{eq:masters_massivetadpole}
    \{ I(1,1,1), I(0,1,1), I(1,0,1), I(1,1,0) \}
  \end{align}
  of which the last three are equal.
  It is interesting to note that the homogeneity of the integral with respect to $m^2$ manifests itself through the relation:
  \begin{align}
    -m^2 \operatorname{NF}_{G^\mathrm{pol}}(a_1 D_1^- + a_2 D_2^- + a_3 D_3^-)
    =
    -m^2 \operatorname{NF}_G(a_1 D_1^- + a_2 D_2^- + a_3 D_3^-)
    =
    d-a_1-a_2-a_3 \mbox{,}
  \end{align}
  or equivalently the IBP operator
  \begin{align}
    -m^2(a_1 D_1^- + a_2 D_2^- + a_3 D_3^-) - (d-a_1-a_2-a_3) \in I_\mathrm{IBP}^\mathrm{pol} \mbox{,}
  \end{align}
  where
  \begin{align}
    I(z_1,z_2,z_3) \bullet \left(-m^2(a_1 D_1^- + a_2 D_2^- + a_3 D_3^-)\right)
    =
    m^2 \frac{\partial}{\partial m^2} I(z_1,z_2,z_3) \mbox{.}
  \end{align}
  
  Computing normal forms with respect to the Gr\"obner basis $G$ one can easily verify that the minimal scaleless monomials $D_1 D_2$, $D_1 D_3$, $D_2 D_3$ are indeed formally scaleless with respect to $I(1,1,1)$ in the sense of \Cref{defn:scaleless}.

  As discussed in \Cref{sec:Groebner} the normal form of $a_i D_i^-$ with respect to the \emph{polynomial} Gr\"obner basis $G^\mathrm{pol} \subset Y^\mathrm{pol}$ includes expressions in the $D_j^-$'s.
  Still, the polynomial reduction verifies that for $i=1,2,3$ the numerator $\operatorname{num}(R_i) \in Y^\mathrm{pol}$ of $R_i \coloneqq a_i D_i^- - \operatorname{NF}_G(a_i D_i^-) \in I_\mathrm{IBP} \lhd Y$ reduces to zero modulo $G^\mathrm{pol}$ in $Y^\mathrm{pol}$, proving the inclusion $\{ \operatorname{num}(R_i) \mid i = 1, 2, 3\} \subset I_\mathrm{IBP}^\mathrm{pol}$.
  These reductions yield as in \Cref{defn:polynomiality} the certificate of polynomiality matrix
  \begin{align}
  \tau
  =
  \begin{pmatrix}
  3 m^2+D_1 & 0 & -2 m^2-D_2 &  -m^2+D_1-D_2-D_3 \\
-2 D_1 & -3 D_1 & -2 m^2+2 D_2 & 2 m^2+D_1+2 D_2-D_3 \\
D_1 & 3 D_1 & 4 m^2-D_2 & 2 m^2-2 D_1-D_2+2 D_3
  \end{pmatrix} \in (Y^\mathrm{pol})^{3 \times 4}
  \end{align}
  satisfying
  \begin{align} \label{eq:tau_massivetadpole}
    N \coloneqq \begin{pmatrix}
      \operatorname{num}(R_1) \\
      \operatorname{num}(R_2) \\
      \operatorname{num}(R_3)
    \end{pmatrix}
    =
    \tau
    \begin{pmatrix}
      r_1 \\
      \vdots \\
      r_4
    \end{pmatrix} \mbox{.}
  \end{align}
  The following nontrivial row-syzygy
  {\small
  \begin{align}
    \begin{pmatrix}
      -a_3 D_3^- & -a_2 D_2^- -a_3 D_3^- & a_1 D_1^ - +a_3 D_3^- & a_3 D_3^-
    \end{pmatrix} \in (Y^\mathrm{pol})^{1 \times 4}
  \end{align}}
  of $\left(\begin{smallmatrix} r_1 \\ \vdots \\ r_4 \end{smallmatrix}\right)$ shows that \eqref{eq:tau_massivetadpole} does not uniquely determine $\tau$.
  
  The computation of the Gr\"obner basis of $N$ in $Y^\mathrm{pol}$ verifies that $N$ does not generate $I_\mathrm{IBP}^\mathrm{pol}$.
  However, we have verified that the (normalized) reduced minimal Gr\"obner bases of $\{r_1, \ldots, r_4\}$ and $\{ \operatorname{num}(R_1), \operatorname{num}(R_2), \operatorname{num}(R_3) \}$ over $Y$ coincide, proving that the latter (or equivalently $\{ R_1, R_2, R_3 \}$) generates the left ideal $I_\mathrm{IBP} \lhd Y$.

\end{exmp}

\section{The special IBP relations} \label{sec:special_ibps}

In the previous section we computed the set of normal-form IBP relations $\{ R_i \mid i = 1, \ldots, n \}$ starting from the set of standard IBP relations $\{ r_i \mid i = 1, \ldots, L(L+E) \}$.
However, it turns out to be more efficient to start with a different set of IBP relations introduced in \cite{Gluza:2010ws,Kosower:2018obg}.

Recall that the standard IBP relations are obtained from \eqref{eq:IBP}
\begin{align}
  r(C) \coloneqq d \cdot C^i_i + \operatorname{tr} \mathcal{E}_C -a_c D_c^- \mathcal{E}_{j,c}^i C^j_i \in Y^\mathrm{pol} \subset Y
\end{align}
when $C \in R^{L(L+E) \times 1}$ runs through a standard basis $\{e_1, \ldots, e_{L(L+E)}\}$, where $\mathcal{E} \in R^{n \times L(L+E)}$ is the IBP-generating matrix over the polynomial ring $R \coloneqq \mathbb{F}[s_1,\ldots, s_q][D_1, \ldots, D_n] \subset Y^\mathrm{pol} \subset Y$.
The ansatz of \cite{Kosower:2018obg} is to find vectors $C \in R^{L(L+E) \times 1}$, possibly different from the standard basis vectors, such that $r(C)$ does not include $D_c^-$ for $c \in \{1, \ldots, n-u \}$.
This means that for
\begin{align}
  \mathcal{E'} \coloneqq \mathcal{E}_{c=1,\ldots,n-u} \in R^{(n-u) \times L(L+E)}
\end{align}
the $c$-th row of $\mathcal{E}' C$ must be a multiple of $D_c$ for $c \in \{1, \ldots, n-u \}$.
This can be easily achieved by computing over the polynomial ring $R$ a column syzygies matrix of the matrix $\mathcal{E}'$ modulo the diagonal matrix $\operatorname{diag}(D_1, \ldots, D_{n-u}) \in R^{(n-u) \times (n-u)}$.

\begin{defn}
  A \textbf{column syzygies matrix} of a matrix $A \in R^{r \times c_A}$ is a matrix $S \in R^{c_A \times ?}$, such that
  \begin{enumerate}
    \item $A S = 0$;
    \item if $A T = 0$ then there exists a matrix $Z$ with $T = S Z$.
  \end{enumerate}
  A \textbf{column syzygies matrix} of a matrix $A \in R^{r \times c_A}$ \textbf{modulo} a matrix $B \in R^{r \times c_B}$ is a matrix $S \in R^{c_A \times ?}$, such that
  \begin{enumerate}
    \item $A S = 0$ modulo $B$, i.e., there exists a matrix $X$ with $A S = B X$;
    \item if $A T = 0$ modulo $B$ then there exists a matrix $Z$ with $T = S Z$.
  \end{enumerate}
  
\end{defn}

\begin{rmrk}
  A column syzygies matrix $S \in R^{c_A \times c_S}$ of $A \in R^{r \times c_A}$ modulo $B \in R^{r \times c_B}$ can be computed by computing a column syzygies matrix $\left(\frac{S}{T}\right) \in R^{(c_A + c_B) \times c_S}$ of the augmented matrix $(A|B) \in R^{r \times (c_A + c_B)}$ and then discarding the lower part $T$.
  Most computer algebra systems with a Gr\"obner basis engine provide procedures to compute $S$ directly without (fully) computing $T$.
\end{rmrk}

\begin{defn}
  Let $\mathcal{S}$ be a column syzygies matrix of $\mathcal{E}'$ modulo $\operatorname{diag}(D_1, \ldots, D_{n-u})$.
  Further let $\langle \mathcal{S} \rangle$ be the $R$-submodule of $R^{L(L+E) \times 1}$ generated by the columns of $\mathcal{S}$.
  We call
  \begin{align}
    r\left(\langle \mathcal{S} \rangle\right) \coloneqq \{ r(C) \mid C \text{ a column of } \langle \mathcal{S} \rangle \}
  \end{align}
  the set of \textbf{special IBP relations} of the loop diagram.
\end{defn}

\begin{rmrk}
  In all examples where we were able to compute the (reduced) Gr\"obner bases of the standard IBP relations and the special IBP relations in the noncommutative polynomial double-shift algebra $Y^\mathrm{pol}$ they coincided, i.e., we were able to prove that
  \begin{align}
    \langle r(C) \mid C \text{ a column of } \mathcal{S} \rangle_{Y^\mathrm{pol}} = I_\mathrm{IBP}^\mathrm{pol} = \langle r_i \mid i=1,\ldots, {L(L+E)} \rangle_{Y^\mathrm{pol}} \mbox{.}
  \end{align}
  This means, the set of special IBP relations generates the same left ideal in $Y^\mathrm{pol}$ as the standard IBP relations.
  This is remarkable since $\langle \mathcal{S} \rangle$ is a \emph{proper} $R$-submodule of $R^{L(L+E) \times 1}$.
\end{rmrk}

\begin{exmp}[One-loop box, \Cref{exmp:one-loopbox}, continued]
  For the one-loop box we have the IBP-generating matrix (cf.~\eqref{eq:Euler})
  {\footnotesize
  \begin{align}
    \mathcal{E}' = \mathcal{E} &=
    \left( \begin{array}{cccc}
 2D_{1} & D_{1}-D_{2} & -s_{12}+D_{2}-D_{3} & -D_{1}+D_{4} \\
 D_{1}+D_{2} & D_{1}-D_{2} & D_{2}-D_{3} & s_{14}-D_{1}+D_{4} \\
 s_{12}+D_{1}+D_{3} & s_{12}+D_{1}-D_{2} & D_{2}-D_{3} & -s_{12}-D_{1}+D_{4} \\
 D_{1}+D_{4} & -s_{14}+D_{1}-D_{2} & s_{14}+D_{2}-D_{3} & -D_{1}+D_{4}
\end{array} \right) \in R^{4 \times 4} \mbox{.}
  \end{align}}%
  The reduced column syzygy matrix of $\mathcal{E}' \in R^{4 \times 4}$ modulo $\operatorname{diag}(D_1, D_2, D_3, D_4) \in R^{4 \times 4}$ is
  \begin{align}
    \mathcal{S} \coloneqq {\scriptsize\left( \begin{array}{cccccc}
 D_{2}-D_{4} & D_{1}-D_{3} & (s_{12}+2D_{3}-2D_{4})D_{4} & (s_{14}-2D_{3}+2 D_{4})D_{3} & -D_{3}D_{4}^{2} & D_{3}^{2}D_{4} \\
 D_{4} & -D_{1} & (-s_{12}+D_{2}-D_{3}+2D_{4})D_{4} & -(D_{1}+D_{4})D_{3} & D_{3}D_{4}^{2} & \cdot \\
 \cdot & -D_{1} & (D_{1}+D_{2})D_{4} & -2D_{1}D_{3} & \cdot & D_{1}D_{3}D_{4} \\
 D_{2} & \cdot & 2D_{2}D_{4} & -(D_{1}+D_{2})D_{3} & D_{2}D_{3}D_{4} & \cdot
\end{array} \right)} \mbox{,}
  \end{align}
  in $R^{4 \times 6}$ producing $6$ special IBP relations in $Y^\mathrm{pol} \subset Y$.
  They are provided in an ancillary file attached to the arXiv submission.
  As mentioned above, a Gr\"obner basis computation in $Y^\mathrm{pol}$ verifies that the $6$ special IBP relations generate $I_\mathrm{IBP}^\mathrm{pol} \lhd Y^\mathrm{pol}$.
\end{exmp}

\section{The Linear Algebra Ansatz} \label{sec:LA-Ansatz}

We now describe a method for producing the normal-form IBP relations without computing the Gr\"obner basis for $I_\mathrm{IBP} \lhd Y$.
This method is based on linear algebra (LA-Ansatz).

We start with a generating set $M = \{ g_1, \ldots, g_t \}$ of the left ideal $I_\mathrm{IBP} \lhd Y$, e.g., the set of standard IBP relations or the set of special IBP relations.
For a fixed order $o \geq 0$:
\begin{itemize}
  \item Multiply $M$ by $(D_1^-)^{i_1} \cdots (D_n^-)^{i_n}$ for $i_j \geq 0$ and $i_1 + \cdots + i_n \leq o$ from the left.
  \item Compute the matrix of coefficients of the resulting  set of elements in $I_\mathrm{IBP} \lhd Y$ as a matrix over
  \begin{align}
    A \coloneqq \mathbb{Q}[d,m_i^2][s_1, \ldots, s_q][a_1, \ldots, a_n] \subset K \coloneqq \mathbb{F}(s_1, \ldots, s_q)(a_1, \ldots, a_n) \mbox{.}
  \end{align}
  \item Use Gaussian elimination over the field $K$ to find expressions of monomials in the $D_i^-$'s in terms of polynomials in the $D_j$'s.
\end{itemize}

If the order $o$ is too small one cannot expect to express sufficiently many monomials in the $D_i^-$'s in terms of polynomials in the $D_j$'s.

\begin{rmrk}
Our general experience shows that starting with the standard IBP relations requires higher values for the order $o$ than starting with the special IBP relations.
\end{rmrk}

One can simulate the Gaussian elimination over $K$ by running the algorithm over the subring $A$ and dividing by the content of the resulting rows of reduction steps.
This relies on fast multivariate gcd computations for which we use the \textsf{Julia} package \textsc{Hecke}.
The complement of the subring $\mathbb{Q}[a_1,\ldots,a_n] < A$ is a multiplicative subset of $A$.
Strictly speaking one must carry out the elimination in the localized subring $(A \setminus \mathbb{Q}[a_1, \ldots, a_n])^{-1} A < K$.
We leave this for future work.

\begin{figure}
  \begin{center}
    \includegraphics[width=0.4\textwidth]{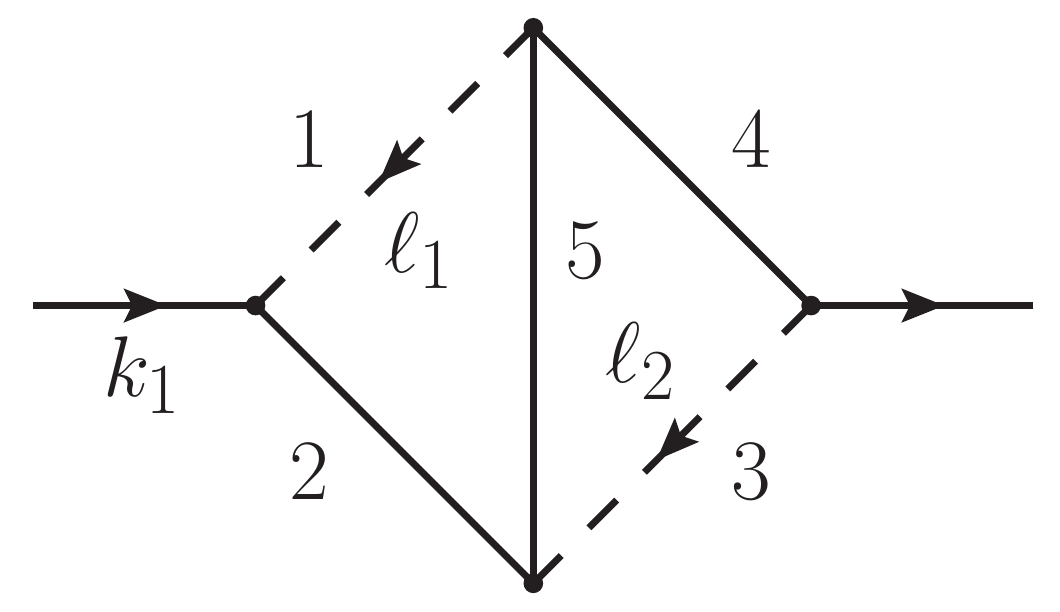}
  \end{center}
  \caption{\label{fig:on-shell-kite}The Feynman graph for the on-shell kite
    integral. Dashed lines denote massless propagators and solid lines
    denote massive propagators with mass $m$. The arrows denote the
    direction of the corresponding momentum.}
\end{figure}

\begin{exmp}[One-loop box]
  The LA-Ansatz applied to the $6$ special IBP relations of \Cref{exmp:one-loopbox} leads already for order $o = 2$ to exactly the same first-order normal-form IBP relations $\{ R_i \coloneqq a_i D_i^- - \operatorname{NF}_G(a_i D_i^-) \mid i = 1, \ldots, 4 \}$ as produced by the GB-Ansatz (see \eqref{eq:one-loop-box}).
  The LA-Ansatz simulates the Gaussian elimination over the field $K = \mathbb{Q}(d)(s_{12},s_{14})(a_1,\ldots,a_4)$.
\end{exmp}

\begin{exmp}[On-shell kite]
  The on-shell kite is defined by the loop momenta $\ell_1, \ell_2$ and the external momentum $k_1$ as indicated in \Cref{fig:on-shell-kite}.
  Therefore the only independent kinematic invariant is $s \coloneqq k_1^2$.
  Owing to the on-shell condition we have $s=m^2$.
  The propagators are
  \begin{equation}
  \begin{aligned}
    P_1 &= -\ell_1^2, \\
    P_2 &= m^2-(\ell_1+k_1)^2, \\
    P_3 &= m^2-(\ell_2+k_1)^2, \\
    P_4 &= -\ell_2^2, \\
    P_5 &= m^2-(\ell_1+\ell_2+k_1)^2 \mbox{.}
  \end{aligned}
  \end{equation}
  
  For the on-shell kite we have the IBP-generating matrix (cf.~\eqref{eq:Euler})
  {\footnotesize
  \begin{eqnarray}
    \mathcal{E}' &=& \mathcal{E} =
    \left( \begin{array}{ccc}
2 D_1 & -D_2-D_3+D_5 & -D_1+D_2 \\
D_1+D_2 & -D_2-D_4+D_5 & -2 s-D_1+D_2 \\
0 & 0 & 0 \\
0 & 0 & 0 \\
D_1-D_3+D_5& -D_2+D_4+D_5 & -2 s-D_1+D_2+D_3-D_4
\end{array} \right.
  \nonumber\\
  && \hspace{2.5cm} \left. \begin{array}{ccc}
  0 & 0 & 0 \\
  0 & 0&   0 \\
  -D_1-D_3+D_5 & D_3+D_4 & -2 s+D_3-D_4 \\
  -D_2-D_3+D_5 & 2 D_4 & D_3-D_4 \\
  D_1-D_3+D_5 & -D_2+D_4+D_5 & -2 s-D_1+D_2+D_3-D_4
\end{array} \right) \in R^{5 \times 6} \mbox{.}
  \end{eqnarray}}%
  The reduced column syzygy matrix $\mathcal{S}$ of $\mathcal{E}' \in R^{6 \times 13}$ modulo $\operatorname{diag}(D_1, D_2, D_3, D_4, D_5) \in R^{5 \times 5}$ is producing $13$ special IBP relations in $Y^\mathrm{pol} \subset Y$.
  The matrix $\mathcal{S}$ and the corresponding special IBP relations are provided in ancillary files attached to the arXiv submission.
  Since we were not able to compute $G^\mathrm{pol}$ we were not able to verify that the $13$ special IBP relations generate $I_\mathrm{IBP}^\mathrm{pol} \lhd Y^\mathrm{pol}$.
  
  Applying the LA-Ansatz to the 13 special IBP relations we get
  \begin{equation}
  \begin{aligned}
    R_1 = a_1 D_1^-
    &
    -\left(\frac{p_{10}}{4 d_{1} d_{2} d_{3} d_{4} d_{7} d_{8} s}
    +
    \frac{p_{12} D_2 + p_{13} D_3 + p_{14} D_4 + p_{15} D_5}{16 d_{1} d_{2} d_{3} d_{4} d_{7} d_{8} d_{9} s^2} \right),
    \\
    R_2 = a_2 D_2^-
    &
    -\left(\frac{p_{20}}{8 d_{1} d_{3} d_{7} d_{10} s}
    +
    \frac{p_{22} D_2 + p_{23} D_3 + p_{24} D_4 + p_{25} D_5}{32 d_{1} d_{3} d_{7} d_{9} d_{10} s^2} \right),
    \\
    R_3 = a_3 D_3^-
    &
    -\left(\frac{p_{30}}{8 d_{1} d_{5} d_{7} d_{10} s}
    +
    \frac{p_{32} D_2 + p_{33} D_3 + p_{34} D_4 + p_{35} D_5}{32 d_{1} d_{5} d_{7} d_{9} d_{10} s^2} \right),
    \\
    R_4 = a_4 D_4^-
    &
    -\left(\frac{p_{40}}{4 d_{1} d_{2} d_{5} d_{6} d_{10} d_{11} s}
    +
    \frac{p_{42} D_2 + p_{43} D_3 + p_{44} D_4 + p_{45} D_5}{16 d_{1} d_{2} d_{5} d_{6} d_{9} d_{10} d_{11} s^2} \right),
    \\
    R_5 = a_5 D_5^-
    &
    -\left(\frac{p_{50}}{8 d_{1} d_{7} d_{10} s}
    +
    \frac{p_{52} D_2 + p_{53} D_3 + p_{54} D_4 + p_{55} D_5}{32 d_{1} d_{7} d_{9} d_{10} s^2} \right) \mbox{.}
  \end{aligned}%
  \end{equation}
  with 
  \begin{align}
  \begin{array}{lll}
    d_1 &\coloneqq a_{1123445}-2 d+1
    &\coloneqq
    2 a_{1}+a_{2}+a_{3}+2 a_{4}+a_{5}-2 d+1, \\
    d_2 &\coloneqq a_{1123445}-2 d+2
    &\coloneqq
    2a_{1}+a_{2}+a_{3}+2 a_{4}+a_{5}-2 d+2, \\
    d_3 &\coloneqq a_{112}-d+1
    &\coloneqq
    2a_{1}+a_{2}-d+1, \\
    d_4 &\coloneqq a_{112}-d+2
    &\coloneqq
    2 a_{1}+a_{2}-d+2, \\
    d_5 &\coloneqq a_{344}-d+1
    &\coloneqq
    a_{3}+2 a_{4}-d+1, \\
    d_6 &\coloneqq a_{344}-d+2
    &\coloneqq
    a_{3}+2 a_{4}-d+2, \\
    d_7 &\coloneqq a_{11235}-d
    &\coloneqq
    2a_{1}+a_{2}+a_{3}+a_{5}-d, \\
    d_8 &\coloneqq a_{11235}-d+1
    &\coloneqq
    2a_{1}+a_{2}+a_{3}+a_{5}-d+1, \\
    d_9 &\coloneqq a_{23445}-d-1
    &\coloneqq
    a_{2}+a_{3}+2 a_{4}+a_{5}-d-1, \\
    d_{10} &\coloneqq a_{23445}-d
    &\coloneqq
    a_{2}+a_{3}+2a_{4}+a_{5}-d,\\
    d_{11} &\coloneqq a_{23445}-d+1
    &\coloneqq
    a_{2}+a_{3}+2 a_{4}+a_{5}-d+1 \mbox{.}
  \end{array}
  \end{align}
  The 13 special IBP relations and the above 5 normal-form IBPs produced by the LA-Ansatz are provided in electronic form as ancillary files.
\end{exmp}

\section{Conclusion} \label{sec:conclusion}

In this paper we propose the (noncommutative) \emph{rational} double-shift algebra\footnote{Another possibility is to use the \emph{rational} Weyl algebra. The (polynomial) Weyl algebra appears in \cite{Lee:2008tj}.} $Y$ \eqref{eq:Y} as an algebraic structure in which the IBP relations form a left ideal $I_\mathrm{IBP}$ (\Cref{defn:I_IBP}).
This algebra admits a Gr\"obner basis notion which we exploit for effective computations.
We proved in \Cref{prop:finite_generation} that this left ideal is finitely generated by the standard IBP relations $r_1, \ldots, r_{L(L+E)}$ \eqref{eq:standard_IBPs}.

Motivated by the observation that the standard IBP relation of the one-loop tadpole forms a reduced Gr\"obner basis of the left ideal $I_\mathrm{IBP}$ we introduced the notion of first-order normal-form IBP relations $R_1, \ldots, R_n$ (\Cref{defn:1st-order_normal-form}), which for nontrivial examples have a significantly more structured form than the standard IBP relations $r_1, \ldots, r_{L(L+E)}$.
The obvious connection between the $R_i$'s and contiguous function relations as in \eqref{eq:contiguous_one-loop-tadpole} and \eqref{eq:contiguous_bubble} suggests their mathematical significance.
Furthermore, the examples in \Cref{sec:examples} show how the symmetries of the integral family are manifested in the normal-form IBP relations.

The first-order normal-form IBP relations in turn led us to the definition of a first-order integral family (\Cref{defn:1storder}).
These are special families for which $R_1, \ldots, R_n$ form a generating set of $I_\mathrm{IBP}$.
We verified for the examples treated in \Cref{sec:examples} that they are first-order integral families.

An obvious advantage of the Gr\"obner basis $G$ and the normal-form IBP relations $R_1, \ldots, R_n$ ---as opposed to the Laporta algorithm--- is that they need to be computed \emph{only once} (and stored in a database) and can then be applied \emph{individually} to any integral of the corresponding integral family.
The latter makes this approach well-suited for parallelization.
However, the bottleneck of this approach is the computation of the noncommutative Gr\"obner basis $G$ of $I_\mathrm{IBP} \unlhd Y$.
Indeed, the existing implementations failed to produce the Gr\"obner basis for the on-shell kite.

The bottleneck in computing the Gr\"obner basis $G$ motivated the Linear Algebra Ansatz developed in \Cref{sec:LA-Ansatz} which enabled us to compute the first-order normal-form IBP relations even when we were not able to compute $G$.
Here it turned out that the special IBP relations of \Cref{sec:special_ibps} yield a more efficient set of generators of $I_\mathrm{IBP}$ than the standard IBP relations.
The special IBP relations are in this sense intermediate between the standard and the normal-form IBP relations.

Examples in \Cref{sec:examples} clearly demonstrate, why the normal-form IBP relations ---regardless of their definition using Gr\"obner bases--- are ideally suited for IBP reductions.
We have provided as an ancillary file a proof of concept \textsc{Mathematica} program which uses the normal-form IBP relations to perform reductions for the integral family of the one-loop box.

Preliminary experiments show that the use of normal-form IBP relations in Laporta's algorithm improves the sparsity of the system of linear equations.
We leave the systematic combination of both approaches for future work.

While we focused on first-order integral families in the work, we expect that integral families with more than one master integral per sector are ``higher-order'' generated.
It would be interesting to have a simple characterization for first-order generation.
We also leave this for future work.

Finally, we mention that the main ideas of Sections~\ref{sec:Y} and~\ref{sec:Groebner} have also been discussed in a somewhat more pedagogical form in~\cite{Barakat:2022ttc}.

\section*{Acknowledgments}

This research was supported by the Deutsche Forschungsgemeinschaft (DFG,
German Research Foundation) under grant 396021762 -- TRR 257 ``Particle Physics Phenomenology after the Higgs Discovery.''

\appendix

\section{Mathematical details} \label{sec:App}

In this Appendix we give the mathematical constructions of the algebras $T$, $\widetilde{T}$ and $R$ from \Cref{sec:Y}.

Let $\mathbb{F}[\ell_1, \ldots, \ell_L, k_1, \ldots, k_E]$ be the polynomial algebra in the $(L+E)d'$ indeterminates $\ell_1^\mu, \ldots, \ell_L^\mu, k_1^\mu, \ldots, k_E^\mu$ ($\mu = 0, \ldots, d'-1$) with coefficients in the field $\mathbb{F} \coloneqq \Q(d, m_i^2)$.
Then define the residue class algebra
\begin{align}
  \widetilde{T} \coloneqq \mathbb{F}[\ell_1, \ldots, \ell_L, k_1, \ldots, k_E] / \langle \rho_o \mid o = 1, \ldots, r \rangle
\end{align}
where $\rho_o$ are affine polynomials in the Lorentz invariant quadratic expressions $k_i \cdot k_j$ (called relations of external momenta).
The elements of $\widetilde{T}$ are residue classes of the form
\begin{align}
  f + \langle \rho_o \mid o = 1, \ldots, r \rangle \mbox{,}
\end{align}
where $f \in \mathbb{F}[\ell_1, \ldots, \ell_L, k_1, \ldots, k_E]$ and $\langle \rho_o \mid o = 1, \ldots, r \rangle$ is the ideal generated by $\rho_o$, for $o = 1, \ldots, r$.

Consider the subalgebra
\begin{align}
  T = \mathbb{F}[p_i \cdot p_j \mid p \in \{\ell, k\}, i,j = 1, \ldots, L+E]  \leq \widetilde{T}
\end{align}
generated by the Lorentz invariant quadratic expressions $p_i \cdot p_j$ (for $p \in \{ \ell, k \}$), where we silently identify $\ell_i, k_j$ with their residue classes in $\widetilde{T}$.
The $n = {\frac{L(L+1)}{2}+LE}$ propagators
\begin{align}
  P_1, \ldots, P_n
\end{align}
are $\mathbb{F}$-linearly independent elements of the $(q+n)$-dimensional $\mathbb{F}$-linear subspace $T_2 < T$ generated by the quadratic expressions $p_i \cdot p_j$ (for $p \in \{ \ell, k \}$), where $q$ depends on the relations $\rho_o$.
This means, that (the representative in $\widetilde{T}$ of) each $P_c$ is an affine polynomial in the quadratic expressions $p_i \cdot p_j$ with constant term either a $\Q$-multiple of $m_i^2 \in \mathbb{F}$ or a rational number.
Let
\begin{align}
  S_1, \ldots, S_q, P_1, \ldots, P_n
\end{align}
be a basis of $T_2$.
The representatives of $S_1, \ldots, S_q$ in the (ambient) algebra $\mathbb{F}[\ell_1, \ldots, \ell_L, k_1, \ldots, k_E]$ can be chosen homogeneous of degree $2$ and are called the \textbf{extra Lorentz invariants}.
Together with the masses they form the kinematic invariants of the process.
Then $T$ can equally be expressed as the subring
\begin{align}
  T = \mathbb{F}[S_1, \ldots, S_q][P_1, \ldots, P_n] \leq \widetilde{T} \mbox{.}
\end{align}
It is isomorphic to the polynomial $\mathbb{F}$-algebra
\begin{align}
  R \coloneqq \mathbb{F}[s_1,\ldots, s_q][D_1, \ldots, D_n]
\end{align}
under the polynomial embedding
\begin{align}
  \lambda:
  \begin{cases}
    R &\hookrightarrow \widetilde{T}, \\
    D_c &\mapsto P_c, \\
    s_e &\mapsto S_e \mbox{,}
  \end{cases}
\end{align}
having $T$ as its image in $\widetilde{T}$.

\providecommand{\href}[2]{#2}\begingroup\raggedright\endgroup


\newpage

\end{document}